\newtheorem{theorem}{Theorem}
\newtheorem{proposition}[theorem]{Proposition}
\newtheorem{conjecture}[theorem]{Conjecture}
\theoremstyle{definition}
\newtheorem{definition}[theorem]{Definition}
\newtheorem{remark}[theorem]{Remark}
\newcommand{\beq}{\begin{equation}}
\newcommand{\eeq}{\end{equation}}
\newcommand{\p}{\partial}
\newcommand*{\pd}
[2]{\mathchoice{\frac{\partial#1}{\partial#2}}
  {\partial#1/\partial#2}{\partial#1/\partial#2}
  {\partial#1/\partial#2}}
\newcommand*{\fd}
[2]{\mathchoice{\frac{\delta#1}{\delta#2}}
  {\delta #1/\delta#2}{\delta#1/\delta#2}{\delta#1/\delta#2}}
\begin{document}

\title{Bi-Hamiltonian structures of WDVV-type}
\author{S. Opanasenko ${}^{*,^\ddag}$ and R. Vitolo ${}^{*,**}$}
\date{}
\maketitle
\vspace{-7mm}
\begin{center}
  $^{**}$Dipartimento di Matematica e Fisica ``E. De Giorgi'',\\
  Universit\`a del Salento\\
  and $^{*}$Sezione INFN di Lecce\\
  via per Arnesano, 73100 Lecce, Italy
  \\
  $^\ddag$\,Institute of Mathematics of NAS of Ukra\"ine,\\
  3 Tereshchenkivska Str., 01024 Kyiv, Ukra\"ine\\
  e-mails:
  \texttt{stanislav.opanasenko@le.infn.it}\\
  \texttt{raffaele.vitolo@unisalento.it}
\end{center}

\begin{abstract}
  We study a class of nonlinear PDEs that admit the same bi-Hamiltonian
  structure as WDVV equations: a Ferapontov-type first-order Hamiltonian
  operator and a homogeneous third-order Hamiltonian operator in a canonical
  Doyle--Pot\"emin form, which are compatible.  Using various equivalence
  groups, we classify such equations in two-component and three-component
  cases.  In a four-component case, we add further evidence to the conjecture
  that there exists only one integrable system of the above type.  Finally, we
  give an example of the six-component system with required bi-Hamiltonian
  structure.  To streamline the symbolic computation we develop an algorithm to
  find the aforementioned Hamiltonian operators, which includes putting forward
  a conjecture on the structure of the metric parameterising the first-order
  Hamiltonian operator.
\end{abstract}

\tableofcontents

\section{Introduction}

\subsection{Integrable Systems and WDVV equations}

In the area of infinite-dimensional integrable systems, bi-Hamiltonian systems
of partial differential equations (PDEs) play a central role
\cite{KrasilshchikVinogradov:SCLDEqMP,Magri:SMInHEq,Magri:SInHP,
  NovikovManakovPitaevskiiZakharov:TS,Olver:ApLGDEq,Zakharov:WIsIn}.
The presence of two independent compatible Hamiltonian structures ensures,
under certain mild hypotheses~\cite{sole14:_lenar}, the existence of infinite
sequences of commuting conserved quantities and commuting symmetries, thus
mimicking the Liouville integrability of the finite-dimensional case.

It was proved by B. Dubrovin that the solutions of
Witten--Dijkgraaf--Verlinde--Verlinde (WDVV) equations can be put in
correspondence with a large class of bi-Hamiltonian structures
(see~\cite{D96}), thus providing a bridge between Topological Field Theories
and Integrable Systems. The consequences in mathematics have been far-reaching,
as solutions of WDVV equations yield Dubrovin--Frobenius manifolds, which
nowadays is an active research topic.

In dimension~$N$, $F=F(t^1,\dots,t^N)$, WDVV equations is a nonlinear
overdetermined system of PDEs,
\[
  \eta^{\lambda\mu}F_{\lambda\alpha\beta}F_{\mu\nu\gamma} =
  \eta^{\lambda\mu}F_{\lambda\alpha\nu}F_{\mu\beta\gamma},
  \quad\text{where}\quad F_{\alpha\beta\gamma}:=\frac{\p^3F}{\p t^\alpha\p
    t^\beta \p t^\gamma}.
\]
The inverse~$(\eta_{\alpha\beta})$ of the constant symmetric nondegenerate
matrix~$(\eta^{\alpha\beta})$ is by definition
$\eta_{\alpha\beta}:=F_{1\alpha\beta}$.  The WDVV equations are equivalent to
the requirement of associativity of a product operation with structure
constants $c^\alpha_{\beta\gamma} =\eta^{\alpha\nu} F_{\nu\beta\gamma}$, and
therefore they are sometimes also called \emph{associativity equations}.  Note
that the requirements on~$F$ completely specify its dependence on~$t^1$ up to
second degree polynomials,
\begin{equation*}
  F=\frac{1}{6}\eta_{11}(t^1)^3 +
  \frac{1}{2}\sum_{k>1}\eta_{1k}t^k(t^1)^2 + \frac{1}{2}\sum_{k,s>1}
  \eta_{sk}t^st^kt^1 + f(t^2,\ldots,t^N).
\end{equation*}

The fact that systems whose solutions yield integrable systems can themselves
be formulated as integrable systems is a common phenomenon in the
field of integrable systems.  Thus, it was proved~\cite{FGMN97} that a system
of WDVV equations, after being reformulated as a quasilinear first-order system
of PDEs in conservative form,
\begin{equation}
  \label{eq:ConsHydroSys}
  u^i_t = (V^i)_x,
\end{equation}
admits a bi-Hamiltonian formulation:
\begin{equation}\label{eq:6}
  u^i_t = A^{ij}_k\fd{H_k}{u^j}, \qquad k=1,2
\end{equation}
(we use Einstein's summation convention throughout the paper).  Here
$u=(u^1,\dots,u^n)$ is the vector of field variables, $(t,x)$ are the
independent variables and $V^i$'s are smooth functions of the dependent
variables.  Apart from a conservative form, the quasilinear system we consider
have another two important properties: non-diagonalisability and linear
degeneracy.  Recall that a quasilinear system $u^i_t=V^i_ju^j_x$ is called
diagonalisable if there exists a change of coordinates $r^i=r^i(u)$ such that
the transformed system is diagonal, $r^i_t=\tilde V^ir^i_x$, and linearly
degenerate if $L_p v = 0$ for every pair $(v,p)$, where~$p$ is the right
eigenvector of the matrix~$(V^i_j)$ corresponding to its eigenvalue~$v$ ($L$ is
the Lie derivative).  The pair of compatible Hamiltonian operators $A_1$, $A_2$
(compatibility means $A_1+\lambda A_2$ is a Hamiltonian operator for any
$\lambda\in \mathbb{R}$ as well) was \emph{different} from the pair of
compatible Hamiltonian operators which is determined by a solution of WDVV
equations.  Bi-Hamiltonianity was further confirmed for other simple cases of
WDVV equations in~\cite{kalayci97:_bi_hamil_wdvv,kalayci98:_alter_hamil_wdvv}.
After some years, advances both in the theory \cite{FPV17:_system_cl} and in
Symbolic Computations~\cite{vasicek22:_wdvv,vasicek21:_wdvv_hamil,
  vitolo17:_hamil_pdes,KVV17} led to the discovery of new bi-Hamiltonian
structures for more complicated WDVV
equations~\cite{PV15,m.v.19:_bi_hamil_orien_assoc,vasicek21:_wdvv_hamil,
  vasicek22:_wdvv}.

\subsection{Bi-Hamiltonian structures for WDVV systems}

A matrix differential operator $A^{ij}=a^{ij\sigma}(u_\tau)D_\sigma$, where
$D_\sigma=D_x^{\sigma}$, $u_\tau$ stands for a finite collection of derivatives
of the dependent variables and $\sigma$, $\tau\in\mathbb{N}_0$, is called
Hamiltonian if the corresponding bracket
\begin{equation*}
  \{F,G\}_A = \int\fd{F}{u^i}A^{ij}\fd{G}{u^j}\,\mathrm dx
\end{equation*}
is Poisson. More precisely, skew-symmetry of the bracket is equivalent to
skew-adjointness of~$A$, and the Jacobi property is equivalent to differential
conditions on the coefficients of~$A$ that have an intrinsic formulation, the
vanishing of the Schouten bracket: $[A,A]=0$. In turn, compatibility of two
Hamiltonian operators $A_1$, $A_2$ is equivalent to the vanishing of the
Schouten bracket, $[A_1,A_2]=0$.

Both Hamiltonian operators $A_1$, $A_2$ admitted by a WDVV system are
homogeneous, where homogeneity is defined with respect to the grading
$\deg{D_x}=1$.  Such operators were introduced
in~\cite{DubrovinNovikov:PBHT,DN83} as a family of operators that is invariant
with respect of diffeomorphisms of the space of field variables.

In particular, the operator~$A_1$ a first-order homogeneous Hamiltonian
operator~$A_1=P$ of Ferapontov type,
\begin{equation}
  \label{eq:16}
  P^{ij}=g^{ij}\mathrm D_x + \Gamma^{ij}_k u^k_x
  + c^{\alpha\beta}w^i_{\alpha h}u^h_x\mathrm D_x^{-1}w^j_{\beta k}u^k_x,
\end{equation}
and an operator~$A_2$ is the third-order Hamiltonian operator~$A_2=R$,
\begin{equation}\label{eq:18}
R^{ij}=\mathrm D_x(f^{ij}\mathrm D_x + c^{ij}_k u^k_x)\mathrm D_x,
\end{equation}
which is compatible with~$P$. All coefficient functions~$g^{ij}$, $f^{ij}$,
$\Gamma^{ij}_k$, $c^{ij}_k$ and~$w^i_{\alpha h}$ above are functions of the
field variables~$u$ only, and $c^{\alpha\beta}$ are constants.

For example, in the simplest case, $N=3$ and
\begin{displaymath}
  \eta=
  \begin{pmatrix}
    0 & 0 & 1\\ 0 & 1 & 0\\ 1 & 0 & 0
  \end{pmatrix}
\end{displaymath}
the WDVV equations reduce to the form
\begin{equation}\label{eq:26}
  f_{ttt} = f_{txx}^2 - f_{xxx}f_{ttx}.
\end{equation}
Following~\cite{FM96:_equat_hamil},
we introduce new dependent variables $u^1=f_{xxx}$, ${u^2=f_{txx}}$,
$u^3=f_{ttx}$. Then, the equation~\eqref{eq:26} can be rewritten in the form
\eqref{eq:ConsHydroSys} as
\begin{equation}
  \label{eq:27}
u^1_t = u^2_x,
    \quad
u^2_t = u^3_x,
    \quad
u^3_t = ((u^2)^2 - u^1u^3)_x.
\end{equation}
It was found in~\cite{FGMN97} that the system~\eqref{eq:27} is bi-Hamiltonian
with the local (\emph{i.e.} $c^{\alpha\beta}=0$) first-order homogeneous
operator~$P$ and the third-order homogeneous operator~$R$,
\begin{gather*}
  P=
  \begin{pmatrix}
    -\frac{3}{2}\mathrm D_x & \frac{1}{2}\mathrm D_x u^1 & \mathrm D_x u^2
    \\
    \frac{1}{2}u^1\mathrm D_x & \frac{1}{2}(\mathrm D_x u^2 + u^2\mathrm D_x) &
    \frac{3}{2}u^3\mathrm D_x + u^3_x
    \\
    u^2\mathrm D_x &  \frac{3}{2}\mathrm D_xu^3 - u^3_x &
    ((u^2)^2-u^1u^3)\mathrm D_x + \mathrm D_x((u^2)^2-u^1u^3)
  \end{pmatrix},\\[1ex]
  R = \mathrm D_x
    \begin{pmatrix}
      0 & 0 & \mathrm D_x \\
      0 & \mathrm D_x & -\mathrm D_xu^1 \\
      \mathrm D_x & -u^1\mathrm D_x &
      \mathrm D_x u^2 + u^2\mathrm D_x + u^1 \mathrm D_x u^1
    \end{pmatrix}\mathrm D_x,
\end{gather*}
The corresponding Hamiltonian densities are
\[
  h_P=u^3\quad\text{and}\quad h_R= - \frac{1}{2}u^1\left(\mathrm
    D_x^{-1}u^2\right)^2 - (\mathrm D_x^{-1}u^2) (\mathrm D_x^{-1}u^3).
\]
\begin{remark}
  We stress that the type of the bi-Hamiltonian pair for WDVV equations is
  different from the bi-Hamiltonian pair determined by a solution of WDVV
  equations. Indeed, the latter is constituted by two compatible local
  homogeneous Hamiltonian operators of order one
  \cite{dubrovin98:_flat_froben,D96}.
\end{remark}

\subsection{Problem and results}

\begin{definition}
  We define a \emph{bi-Hamiltonian structure of WDVV-type} to be a pair of
  compatible Hamiltonian operators $P$, $R$ as in \eqref{eq:16}, \eqref{eq:18},
  respectively.

  We say a quasilinear first-order system of PDEs in conservative
  form~\eqref{eq:ConsHydroSys} to be a \emph{bi-Hamiltonian system of WDVV-type} if it is
  endowed with a bi-Hamiltonian structure of WDVV-type.
\end{definition}

\textbf{The problem}: \emph{In this paper, we aim at classifying (when
  possible) bi-Hamiltonian equations of WDVV-type, and at the same time at
  introducing new bi-Hamiltonian equations of WDVV-type, which are not
  necessarily related with WDVV equations.}

As this task presents several theoretical and computational
challenges, our paper contains both the theoretical advances and the
computational algorithms that made this research possible.

\textbf{The main results} of the paper are listed below, following the number of
unknown functions~$u^i$. Note that we discard linear bi-Hamiltonian systems of
WDVV type as we regard them as trivial.
\begin{description}
\item[$n=2$] We have an affine classification of bi-Hamiltonian structures of
  WDVV-type, according to which there exist two classes of nonlinear
  bi-Hamil\-to\-nian systems of WDVV type. Both cases turn out to be
  linearisable if we enlarge the group action to the group of projective
  reciprocal transformations that preserve~$t$.
\item[$n=3$] In this case, under the action of the group of projective
  reciprocal transformations that preserve~$t$, we have five nontrivial
  cases. Three of them are particular types of WDVV equations known in the
  literature, while two of them are new. In order to compute the leading
  coefficient of the operator $P$ we put forward Conjecture~\ref{conj:comp-g}
  in Subsection~\ref{sec:conj-struct-g_1} that proved to be true in all known
  cases.

  Then, under the action of the full group of projective reciprocal
  transformations, it can be shown that all nontrivial cases reduce to one, the
  simplest WDVV equation. Unfortunately, the corresponding transformation is
  practically impossible to find with current computational tools. This
  implies that our methods, which allow to find the bi-Hamiltonian structure,
  are still highly valuable.
\item[$n=4$] In this case, systems with third-order Hamiltonian structures have
  been classified, however, only one of them is known to be integrable. The
  system is obtained within the geometric theory of linearly degenerate systems
  in the Temple class \cite{agafonov05:_integ} (no relationship with WDVV).  We
  found a new bi-Hamiltonian structure of WDVV-type for this system.  In
  some other interesting cases we proved that such structures do not exist,
  although we cannot guarantee it in general.
\item[$n=6$] We proved that two commuting systems of the type~\eqref{eq:ConsHydroSys} are
  bi-Hamil\-to\-nian systems of WDVV-type. These are two integrable systems
  that arise from integrability conditions of a certain class of
  Lagrangians~\cite{FPX2021}. The systems are not related with WDVV equations;
  this, together with the previous item, proves that the class of
  bi-Hamiltonian systems of WDVV-type does not reduce to WDVV equations only.
\end{description}

It is worth to compare bi-Hamiltonian structures in this paper with another
family of bi-Hamiltonian structures that has been introduced
in~\cite{LSV:bi_hamil_kdv}.  Such structures are of the form $A_1=P$,
$A_2=Q+\epsilon^2 R$, and are said to be bi-Hamiltonian structures of
KdV-type. The operators~$(P,Q,R)$ are a compatible triple of homogeneous
Hamiltonian operators, with~$P$, $Q$ being of the form~\eqref{eq:16} and $R$
being a higher order homogeneous Hamiltonian operator. Note that $\epsilon$ is
introduced as a perturbative parameter. For
example, the triple
\begin{equation*}
  P=D_x,\quad Q=\frac{2}{3}u\mathrm D_x + \frac{1}{3}u_x,
\quad R=\mathrm D_x^3
\end{equation*}
provides a bi-Hamiltonian structure for the KdV equation. Many integrable
systems have this form (see~\cite{LSV:bi_hamil_kdv} for a detailed list of
examples and references).

\begin{remark}
When comparing bi-Hamiltonian structures of KdV and WDVV type, it is evident
that \emph{we can treat the bi-Hamiltonian structures of WDVV type as a
  singular case of bi-Hamiltonian structures of KdV type}. The `singularity' of
the bi-Hamiltonian pair goes in opposite direction with respect to the usual
dispersionless limit $\epsilon\to 0$.
\end{remark}

The results that we obtained in this paper show that the class of
bi-Hamiltonian structures of WDVV type is rich and interesting, and it deserves
further investigation.

\subsection{Computational problems and their solutions}
\label{sec:comp-probl-their}

Generally speaking, finding Hamiltonian operators for systems of PDEs of the
type \eqref{eq:ConsHydroSys} in high ($n\geq 3$) dimension and proving their
compatibility is not an easy task. The computational problems to be solved are
summarised below.
\begin{itemize}
\item Finding a Hamiltonian formulation \eqref{eq:6} through a first-order
  operator~$P$ as in~\eqref{eq:16} amounts to solving a complicated system of
  nonlinear PDEs. Such a system is not easily solvable even when $n=3$.
\item On the other hand, third-order operators are
  classified~\cite{FPV14,FPV16} in low dimensions, as well as the systems of
  the form~\eqref{eq:ConsHydroSys} that admit a corresponding Hamiltonian
  formulation~\cite{FPV17:_system_cl}. Hence we can start from a third-order
  operator $R$ from the classification and the corresponding systems, and find
  first-order Hamiltonian formulations~\eqref{eq:6}.
\item First-order operators $P$ for the above systems of PDEs are found by
  solving the system mentioned above, with the additional requirement of
  compatibility: $[P,R]=0$.
\end{itemize}

Compatibility is the major computational problem here. First of all, until
recently, there did not exist a way to bring Schouten bracket between nonlocal
operators in a canonical form, and require its vanishing. In \cite{CLV19} an
algorithm was presented. However, finding a \emph{minimal} set of conditions
which are equivalent to $[P,R]=0$ is the problem that is still unresolved.

In~\cite{m.20:_weakl_poiss} the above algorithm was implemented in the computer
algebra systems Maple, Mathematica and Reduce. Thus, in principle, we can
compute $[P,R]$ and require its vanishing if one of the two operators is
unknown. However, the computational complexity is sometimes overwhelming, even
for dedicated compute servers, and the main problem is to keep the complexity
within the limit of what our computers can do.

The requirement $[P,R]=0$ for unknown $P$ leads to a complicated overdetermined
nonlinear system of PDEs that we are able to solve only in dimension $n=2$ and
in the simplest case in dimension $3$. Working in Maple computer algebra
system, we used \texttt{rifsimp} to reduce the equations to an involutive
system with a known dimension of finite-dimensional solution space and then
\texttt{pdsolve} to solve it.

For more general calculations, we used Conjecture~\ref{conj:comp-g}; it yields
a condition that we always verified in concrete compatible operators~$P$
and~$R$.  Such a condition is very likely to be an important part of the
compatibility conditions, and brings again the problem of finding $P$ to a
system of algebraic equations, which becomes manageable also in high
dimensions.

However, that is not all: we still need to check that $[P,R]=0$, although at
this point that is only a straightforward computation as $P$ and $R$ are
completely specified. Not all such calculations can be performed on a modern
laptop: primarily, huge amounts of RAM are needed as the formula of the
Schouten bracket (see Subsection~\ref{sec:conj-struct-g_1}) and the algorithm
for bringing the expressions to a canonical form~\cite{CLV19} imply the
calculation of iterated derivatives of large rational expressions, leading to
expression swell that only at the very end simplifies to zero (if that is the case).

In particular, the largest successful direct calculations have been performed
for the calculations of $[P,Q]$ in the case $R^{(2)}$ ($n=3$) ($15$GB of RAM
and $90$ hours of computing time) and in the case $n=4$.  In the case $R^{(1)}$
($n=3$) the calculation of $[P,Q]$ failed for lack of RAM.  This led to a
re-analysis of the algorithm in~\cite{CLV19} and a work-around has been found.
Since in the computation all denominators are of the form $\sigma^n$,
where~$\sigma$ is the singular variety of the corresponding Monge metric,
while~$n$ varies, an expression~$\sigma^{-1}$ was given a new notation,
effectively making all rational expressions polynomials.  This modification led
to a much reduced RAM consumption in case~$R^{(2)}$ (only $2$GB!), and manyfold
reduction of computing time, and allowed us to prove compatibility of the
Hamiltonian operators~$R^{(1)}$ and~$P^{(1)}$ using $45$GB of RAM and $54$
hours of computing time.

Surprisingly, the case $n=6$ turned out to be manageable on a laptop with
$16$GB of RAM, mostly due to its locality.

All large calculations have been performed on a compute server of the INFN,
Section of Lecce. The server has a processor AMD EPYC 7282 and 256GB of RAM.
The most important computations that have been programmed by us
are available at \cite{wdvv-sw}; we will be happy to help other researchers who
are interested in our software or similar computational tasks.

\section{A pair of Hamiltonian structures}

Our goal is to classify quasilinear first-order systems of evolutionary PDEs of
the type~\eqref{eq:ConsHydroSys} that are endowed with a pair of Hamiltonian
operators~\eqref{eq:16} and~\eqref{eq:18}.

\subsection[First-order Hamiltonian operator of Ferapontov type]{First-order Hamiltonian operator\\ of Ferapontov type}
\label{sec:first-order-hamilt}

A Hamiltonian operator~$P$,
\[
  P^{ij}=g^{ij}\mathrm D_x + \Gamma^{ij}_s u^s_x
  + c^{\alpha\beta}w^i_{\alpha s}u^s_x\mathrm D_x^{-1}w^j_{\beta t}u^t_x,
\]
is a nonlocal generalisation of a classical Dubrovin--Novikov Hamiltonian operator ($c^{\alpha\beta}=0$),
which was introduced by Ferapontov in~\cite{F95:_nl_ho}.
The operator~$P$ is Hamiltonian if and only if the following properties are fulfilled
(we assume the non-degeneracy condition $\det (g^{ij})\neq 0$):
\begin{subequations}\label{eq:17}
  \begin{gather}
    g^{ij}=g^{ji},
    \quad
    g^{ij}_{,k} = \Gamma^{ij}_k + \Gamma^{ji}_k,
    \quad
    g^{is}\Gamma^{jk}_s = g^{js}\Gamma^{ik}_s,
    \\
    g^{is}w^j_{\alpha s} = g^{js}w^i_{\alpha s},\label{eq:185}
    \\
    \nabla_k w^i_{\alpha j} = \nabla_j w^i_{\alpha k},\label{eq:187}
    \\
    [w_\alpha,w_\beta] = 0,\label{eq:19}
    \\
    c^{\alpha\beta}=c^{\beta\alpha},\label{eq:19.5}
    \\
    R^{ij}_{kl} = c^{\alpha\beta}\Big( w^i_{\alpha l}w^j_{\beta k} - w^i_{\alpha k}w^j_{\beta l}\Big).\label{eq:20}
  \end{gather}
\end{subequations}
Here and below, $g^{ij}_{,k}=\pd{g^{ij}}{u^k}$ and
$w_\alpha=w^i_{\alpha j}u^j_x\pd{}{u^i}$. We observe that the
conditions~\eqref{eq:19.5} and \eqref{eq:20} differ from those found in the
literature, since the operator~$P$ is a slightly generalised form of an
operator originally introduced in~\cite{F95:_nl_ho}, where the
matrix~$(c^{\alpha\beta}$) is diagonal. Nonetheless, they can easily be
deduced from the calculations in~\cite{CLV19} after replacing the ansatz for
the first-order operator with the ansatz~\eqref{eq:16}.

The number~$m$ of the commuting vector fields (and hence the dimen\-si\-on~$m$
of the symmetric matrix $(c^{\alpha\beta})$ is not known \emph{a
  priori}. However, there is an important computational feature~\footnote{ It
  is not a theorem, but an experimental observation communicated to us by
  E.V. Ferapontov.}  of linearly degenerate non-diagonalisable systems is that
in low dimension ($n\leq 5$) they only admit two commuting flows of first
order, namely the $x$- and $t$-translations, represented by
\begin{equation*}
  %  \label{eq:30}
      w_1=V^i_ju^j_x\pd{}{u^i}\quad\text{and}\quad w_2=u^i_x\pd{}{u^i}.
\end{equation*}
All but one of our examples fall within this category of equations,
and the one that does not is local, see Section~\ref{sec:AndBeyound},
and therefore we shall restrict ourselves to this case hereafter.

The above operator $P$ is clearly invariant with respect to local
diffeomorphisms of the dependent variables: $\tilde{u}=\tilde{U}(u)$. Under
such transformations, $g^{ij}$ transform as a contravariant $2$-tensor, and
$\Gamma^{ij}_k$ transform as the contravariant Christoffel symbols of a linear
connection.  Thus, geometrically the conditions~\eqref{eq:17} mean that if
$(g_{ij})=(g^{ij})^{-1}$, then $\Gamma^i_{jk}=-g_{jh}\Gamma^{hi}_k$ are the
Christoffel symbols of the Levi-Civita connection of the metric~$g_{ij}$.  Its
curvature, expressed via the Kobayashi--Nomizu convention:
\begin{equation*}
%  \label{eq:22}
  R^i_{jkl} = \Gamma^i_{lj,k} - \Gamma^i_{kj,l} + \Gamma^i_{kr}\Gamma^r_{lj}
    - \Gamma^i_{lr}\Gamma^r_{kj},
\end{equation*}
admits the expansion~\eqref{eq:20}, where
\begin{equation*}
%  \label{eq:60}
  R^{ijk}_l=  g^{is}g^{jt}R^k_{tsl} =
  g^{is}(\partial_l\Gamma^{jk}_{s}-\partial_s\Gamma^{jk}_{l})
  +\Gamma^{ij}_s\Gamma^{sk}_{l} -\Gamma^{sj}_l\Gamma^{ik}_{s}
\end{equation*}
and $ R^{jk}_{hl}= g_{hi}R^{ijk}_l$.  The
equations~\eqref{eq:185}--\eqref{eq:20} are nothing else but the
Gauss--Peterson--Codazzi equations for submanifolds~$M^n$ with a flat normal
connection in a (pseudo-)Euclidean space of dimension~$n+N$, where the
metric~$g$ plays the role of the first quadratic form of~$M^n$, and $w_\alpha$
are the Weingarten operators corresponding to the field of pairwise orthogonal
unit normals.

It was also proved \cite{pavlov95:_conser_hamil} that the nonlocality type of~$P$ is preserved under linear transformations
of the independent variables~$(t,x)$.

Finally, see~\cite{F95:_nl_ho,MalNov2001} for a discussion on
how to define a Hamiltonian for the operator~$P$ in the nonlocal case.

\subsection[Third-order Hamiltonian operator \\
in the Doyle--Pot\"emin canonical form]
{Third-order Hamiltonian operator\\
  in the Doyle--Pot\"emin canonical form}
\label{sec:third-order-hamilt}

The most general form of a third-order homogeneous Hamiltonian operator,
in accordance with the definition given in \cite{DubrovinNovikov:PBHT}, is
\begin{equation}
  \label{eq:25}
    \begin{split}
    R^{ij}=
    & g_3^{ij}\mathrm D_{x}^{3}
      +b_{3\,s}^{ij}u_{x}^{s}\mathrm D_{x}^{2}
  +[c_{3\,s}^{ij}u_{xx}^{s}
  +c_{3\,st}^{ij}u_{x}^{s}u_ {x}^{t}]\mathrm D_{x}
  \\
  & +d_{3\,s}^{ij}u_{xxx}^{s}
  +d_{3\,st}^{ij}u_{x}^{s}u_{xx}^{t}
  +d_{3\,srt}^{ij}u_{x}^{s}u_{x}^{r}u_{x}^{t}.
\end{split}
\end{equation}
Nevertheless, in view of the computational difficulties, a minimal set of
conditions on the coefficient of the above operator $R$ that are equivalent to
its skew-adjointness and $[R,R]=0$ is not known.

It was independently proved in~\cite{doyle93:_differ_poiss} and~\cite{potemin91:PhDt, potemin97:_poiss}
that there always exists a change of
dependent variables that brings a Hamiltonian operator of the
form~\eqref{eq:25} into the \emph{Doyle--Pot\"emin canonical form}~\eqref{eq:18}.
This form has a drastically reduced set of coefficients, and the Hamiltonian
property of such an operator is equivalent to
\begin{subequations}\label{eq:505}
  \begin{gather}
    c_{ijk}=\frac{1}{3}(f_{ik,j}-f_{ij,k}), \label{eq:506}\\
    f_{ij,k}+f_{jk,i}+f_{ki,j}=0,\label{eq:507}\\
    c_{ijk,l}=-f^{pq}c_{pil}c_{qjk}.\label{eq:733}
  \end{gather}
\end{subequations}
Here, $c_{ijk}=f_{iq}f_{jp}c_{k}^{pq}$ and, of course, $(f_{ij})=(f^{ij})^{-1}$.

A covariant $2$-tensor $f$ satisfying equations \eqref{eq:507} must be a Monge metric of a quadratic line complex (see~\cite{FPV14} and reference therein).
This is an algebraic variety in the Pl\"ucker embedding of the projective space whose affine chart is $(u^i)$.
In particular, being Monge, $f_{ij}$ is a quadratic polynomial in the field variables~$u$.

It was further proved in~\cite{balandin01:_poiss,doyle93:_differ_poiss,potemin91:PhDt,potemin97:_poiss}
that the tensor $f_{ij}$ can be factorised as
\begin{equation}
  \label{eq:12}
  f_{ij} = \phi_{\alpha\beta}\psi^\alpha_i\psi^\beta_j,\quad \text{\Big(or, in a matrix form,} \quad f=\Psi\Phi\Psi^\top\Big)
\end{equation}
where $\phi$ is a constant non-degenerate symmetric matrix of dimension~$n$,
and
\begin{equation*}
  \psi _{k}^{\gamma }=\psi _{ks}^{\gamma }u^s+\omega _{k}^{\gamma } %\label{lin}
\end{equation*}
is a non-degenerate square matrix of dimension~$n$, with the
constants~$\psi _{ij}^{\gamma }$ and~$\omega _{k}^{\gamma }$ satisfying the
relations
\begin{gather*}
  \psi _{ij}^{\gamma }=-\psi _{ji}^{\gamma },
  \\
  \phi _{\beta \gamma }(\psi _{il}^{\beta }\psi _{jk}^{\gamma }+\psi_{jl}^{\beta }\psi _{ki}^{\gamma }
  +\psi _{kl}^{\beta }\psi _{ij}^{\gamma})=0,
  \\
  \phi _{\beta \gamma }(\omega _{i}^{\beta }\psi _{jk}^{\gamma
  }+\omega _{j}^{\beta }\psi _{ki}^{\gamma }+\omega _{k}^{\beta }\psi
  _{ij}^{\gamma })=0.
\end{gather*}
The interested reader can find the general expression of the Hamiltonian for
the operator~$R$ (provided it exists for a given system) in~\cite{FPV17:_system_cl}.

The system that admits the operator~$R$ in a canonical form is evidently conservative.
On the other hand, if a system admits such an operator, then different conservative forms thereof
may not necessarily admit third-order Hamiltonian operator in a canonical form,
see discussion in~\cite[p.~664]{FGMN97}.

 \subsection{Compatibility and a conjecture}
\label{sec:conj-struct-g_1}

The only missing part of the picture are conditions under which the
operators~$P$ and~$R$ are compatible, \emph{i.e.} their Schouten bracket
vanishes: $[P,R]=0$. We use the definition of Schouten bracket in \cite{CLV19}:
\begin{align*}
%  \label{eq:40}
  [A_1,A_2](\psi^1,\psi^2,\psi^3) =
  &\big[(\ell_{A_1,\psi^1}(A_2(\psi^2)))\psi^3 + \text{cyclic}(\psi^1,\psi^2,\psi^3)
  \\
  &+(\ell_{A_2,\psi^1}(A_1(\psi^2)))\psi^3  +
      \text{cyclic}(\psi^1,\psi^2,\psi^3)\big],
  \notag
\end{align*}
where $\psi^a=(\psi^a_i(x,u_\sigma))$, $a=1,2,3$, are covectors and the square
brackets on the right-hand side mean that the result is to be considered up to
the image of $\mathrm D_x$ (\emph{i.e.}, `total divergencies').  In the above
formula we use the linearization of an operator $A$
\cite{KrasilshchikVinogradov:SCLDEqMP}: if, in local coordinates, one has
$A(\psi)^i= a^{ij\sigma}\mathrm D_\sigma\psi_j$ then
\begin{gather*}
%  \label{eq:41}
  \ell_{A,\psi}(\phi)^i = \frac{\partial a^{it\sigma}}{\partial u^s_\tau}
  \mathrm D_\sigma\psi_t\mathrm D_\tau\phi^s,\quad
  \phi=(\phi^s(x,u_\sigma)).
\end{gather*}
In~\cite{CLV19}, presented is an algorithm to compute a divergence-free form of the bracket~$[A_1,A_2]$
for a wide class of nonlocal operators (\emph{weakly nonlocal operators}), to which Ferapontov-type operators belong.
Based on this algorithm, in~\cite{m.20:_weakl_poiss} software packages were developed for Mathematica, Maple and Reduce.

A minimal set of conditions that are equivalent to $[P,R]=0$ would be a big help,
but so far a solution to this problem has been out of reach.
An alternative is to figure out a suitable ansatz for Hamiltonian operators.
Known examples of bi-Hamiltonian systems of WDVV-type~\cite{vasicek21:_wdvv_hamil} allow us
to formulate the following conjecture.

\begin{conjecture}\label{conj:comp-g}
  Let a Ferapontov-type operator~$P$ parameterised by a metric~$g_{ij}$ and a
  Doyle--Pot\"emin operator~$R$ parameterised by a Monge metric~$f_{ij}$ with
  the Monge decomposition $f_{ij}=\psi^\alpha_i\phi_{\alpha\beta}\psi_j^\beta$
  be compatible.  Then the symmetric matrix
  \begin{equation*}%\label{eq:127}
    Q^{\alpha\beta} = \psi^\alpha_ig^{ij}\psi^\beta_j
  \end{equation*}
  has entries that are second-degree polynomials in the field
  variables.
\end{conjecture}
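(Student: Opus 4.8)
The plan is to show that $Q^{\alpha\beta}$ is a polynomial of degree at most two by proving that its third derivatives vanish, $\p_k\p_l\p_m Q^{\alpha\beta}=0$. Since $\psi^\gamma_k=\psi^\gamma_{ks}u^s+\omega^\gamma_k$ is affine, so that $\p_m\psi^\gamma_k=\psi^\gamma_{km}$ is constant and $\p_n\p_m\psi^\gamma_k=0$, every nontrivial contribution to $\p_k\p_l\p_m Q^{\alpha\beta}$ comes from the derivatives of the metric $g^{ij}$. Thus the whole question reduces to controlling the first three derivatives of $g^{ij}$ along the compatible pair $(P,R)$.

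A useful reformulation, which I would keep in mind throughout, is obtained in matrix form. Writing $\Psi=(\psi^\alpha_i)$, $G=(g^{ij})$, $F=(f_{ij})$ and $\Phi=(\phi_{\alpha\beta})$, the Monge decomposition reads $F=\Psi^\top\Phi\Psi$, and the same construction applied to the inverse Monge metric is identically constant, $\psi^\alpha_i f^{ij}\psi^\beta_j=(\Phi^{-1})^{\alpha\beta}$. A short computation gives $\Psi^\top\Phi=F\Psi^{-1}$ and hence the conjugacy
\beq
  Q\Phi=\Psi\,L\,\Psi^{-1},\qquad L^i_{\ k}:=g^{ij}f_{jk}.
\eeq
In particular the characteristic polynomial of $Q\Phi$ coincides with that of the operator $L=g^{-1}f$ comparing the two metrics, so the conjecture asserts that these conjugation invariants — and in fact the full matrix $Q\Phi$ — are polynomial of degree two. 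This isolates $L$ as the geometric object whose behaviour under compatibility must be understood, and makes transparent why the statement is nontrivial: the same affine dressing $X\mapsto\Psi X\Psi^\top$ that collapses $f^{-1}$ to a constant is claimed to preserve the degree when applied to $g$.

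The core of the argument is to express $\p_k g^{ij}$, $\p_l\p_k g^{ij}$ and $\p_m\p_l\p_k g^{ij}$ through the data of the pair and to show that they assemble into a constant at the level of $\p_l\p_k Q^{\alpha\beta}$. The first derivative is given directly by $g^{ij}_{,k}=\Gamma^{ij}_k+\Gamma^{ji}_k$. Differentiating once more brings in $\p\Gamma$, which I would repackage using the curvature $R^{ijk}_l$ and then replace by the right-hand side of~\eqref{eq:20}; under the two-commuting-flows reduction this curvature is a fixed quadratic expression in $w_1=V$, $w_2=\mathrm{Id}$ and the constants $c^{\alpha\beta}$. A third differentiation produces $\p^2\Gamma$, equivalently $\p R$, which I would control with the differential Bianchi identity together with $\nabla_k V^i_j=\nabla_j V^i_k$ from~\eqref{eq:187}. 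The claim is then that, after contracting with the constant tensors $\psi^\alpha_{ik}$, $\psi^\beta_{jl}$ and symmetrising, every non-constant contribution to $\p_l\p_k Q^{\alpha\beta}$ cancels, forcing $\p_m\p_l\p_k Q^{\alpha\beta}=0$.

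The step that closes the computation, and which I expect to be the main obstacle, is the use of the compatibility conditions $[P,R]=0$ themselves: the individual Hamiltonian conditions~\eqref{eq:17} and~\eqref{eq:505} pin down the geometry of $g$ and $f$ separately, but the cancellation of the top-degree terms of $Q$ must come from the cross-relations between $\Gamma$, $f_{ij}$, $V$ and $c^{\alpha\beta}$ encoded in the mixed Schouten bracket. As the paper itself emphasises, no minimal closed-form set of conditions equivalent to $[P,R]=0$ is presently known, so the required identities must be extracted from the weakly-nonlocal Schouten-bracket expansion of~\cite{CLV19}, fighting the same expression swell described in Subsection~\ref{sec:comp-probl-their}. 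A realistic intermediate target, and the way I would first attack the conjecture, is to prove it under the two-commuting-flows hypothesis and for the finitely many Monge metrics classified in low dimension: there the statement becomes a finite system of polynomial identities in the structure constants $\psi^\gamma_{ks}$, $\omega^\gamma_k$, $\phi_{\alpha\beta}$, $c^{\alpha\beta}$ and the coefficients of $V$, whose verification is a finite (if lengthy) computation, even though the fully general case remains open.
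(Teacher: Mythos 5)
Be aware of what you are comparing against: the statement is left as a \emph{conjecture} in the paper. The authors give no proof; their only support is that the factorisation $g=\Psi^{-1}Q(\Psi^{-1})^\top$ held in every example where a compatible pair could be computed directly, and they then use it as an ansatz whose output is checked a posteriori by verifying $[P,R]=0$. So the benchmark is case-by-case verification, not a general argument. Against that benchmark, your preliminary observations are correct and genuinely go beyond the paper: the reduction of the claim to $\p_k\p_l\p_m Q^{\alpha\beta}=0$ (legitimate since $\psi^\gamma_k$ is affine), the identity $\psi^\alpha_i f^{ij}\psi^\beta_j=(\Phi^{-1})^{\alpha\beta}$, and the conjugacy $Q\Phi=\Psi L\Psi^{-1}$ with $L^i_{\ k}=g^{ij}f_{jk}$, which identifies the conjecture as a degree bound on the affinor comparing the two metrics. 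These check out and could be a useful starting point.

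However, the core of your argument is asserted rather than proved, and as described it cannot close. Condition \eqref{eq:20} controls only the \emph{curvature}, i.e.\ an antisymmetrised combination of $\p\Gamma$ plus terms quadratic in $\Gamma$; the part of $\Gamma^{ij}_{k,l}$ symmetric in the derivative indices --- which is exactly what enters $\p_l\p_k g^{ij}$ and hence $\p_l\p_k Q^{\alpha\beta}$ --- is not determined by \eqref{eq:17} and \eqref{eq:20}, and for a Levi-Civita connection ``expressing $\p\Gamma$ through $g$'' is circular. The cancellations must therefore come from the mixed compatibility $[P,R]=0$, and here lies the gap: as the paper itself stresses, no minimal closed-form set of conditions equivalent to $[P,R]=0$ is known, and you do not extract a single concrete identity from the weakly nonlocal Schouten bracket; the statement that ``every non-constant contribution to $\p_l\p_k Q^{\alpha\beta}$ cancels'' is precisely the conjecture restated, not a derivation. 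You also silently restrict to the two-commuting-flows nonlocal tail, which narrows the conjecture as stated (arbitrary $c^{\alpha\beta}w_\alpha$). Finally, your fallback --- verifying the claim for the classified low-dimensional Monge metrics --- is essentially what the authors already did, and even there the admissible $V$ depend on continuous parameters $\alpha,\beta,\gamma,\delta$, so it is a parametric polynomial-identity check supplying evidence, not a proof. By your own closing admission the general case remains open; as a proof of the conjecture, the proposal has a genuine missing idea at its central step.
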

In turn, it means that a suitable ansatz for $g$ is
\begin{equation*}
%  \label{eq:128}
  g^{ij} = \psi_\alpha^i Q^{\alpha\beta} \psi_\beta^j
  \quad\text{or, in a matrix form,}\quad g=\Psi^{-1}Q(\Psi^{-1})^\top.
\end{equation*}
Therefore, whenever finding a Ferapontov-type operator with a generic ansatz is
not feasible, we resort to the above ansatz.  Luckily, when we are able to make
a general computation without resorting to an ansatz, the result is always
within the ansatz, thus indicating that the ansatz might be a feature of all
first-order operators $P$ which are compatible with a third-order operator~$R$
as above.

Finally, note that although $(Q^{\alpha\beta})$ is quadratic in its entries, it
is not a Monge metric in general.

\subsection[Classification of bi-Hamiltonian equations of WDVV-type]{
  Classification of bi-Hamiltonian equations\\ of WDVV-type}
\label{sec:class-bi-hamilt}

We classify quasilinear first-order systems of PDEs in a conservative
form~\eqref{eq:ConsHydroSys},
\[u_t=(V^i)_x=V^i_su^s_x,\]
that are bi-Hamiltonian with respect to a pair of operators $P$, $R$. Here, $P$
is a Ferapontov operator of the form
\begin{gather*}
  P^{ij}=g^{ij}\mathrm D_x + \Gamma^{ij}_s u^s_x\\
  + c^{11}V^i_su^s_x\mathrm D_x^{-1}V^j_ru^r_x
  + c^{12}\left(V^i_su^s_x\mathrm D_x^{-1}u^j_x
  + u^i_x\mathrm D_x^{-1}V^j_su^s_x\right)
  + c^{22}u^i_x\mathrm D_x^{-1}u^j_x;
\end{gather*}
Note that we used only two commuting vector fields in $P$, in view of the
discussion in Subsection~\ref{sec:first-order-hamilt}. The operator $R$ is a
third-order homogeneous local Hamiltonian operator $R$, and can always be
brought to the Doyle--Pot\"emin canonical form
by a transformation of the dependent variables.  Such a transformation neither
changes the locality (resp., nonlocality) of the Ferapontov operator~$P$, nor
it changes the shape of the system~\eqref{eq:ConsHydroSys}.

In turn, Doyle--Pot\"emin operators are classified for $n=1,\dots,4$ with
respect to several group actions~\cite{FPV14,FPV16}:
\begin{itemize}
\item the maximal group of transformations of the dependent variables that
  preserves the Doyle--Pot\"emin canonical form of~$R$ --- the group of affine
  transformations $\tilde{u}^i=A^i_ju^j + A^i_0$ with constant~$A^i_j$
  and~$A^i_0$;
\item the group of projective reciprocal transformations that fix $t$ (as it
  was found in~\cite{FPV14}),
\begin{equation}
\begin{split}
  \label{eq:29}
  \tilde{u}^i = \frac{A^i_j u^j + A^i_0}{\Delta},\quad \Delta = A^0_ju^j + A^0_0,
  \\
  \mathrm d\tilde{x} = \Delta\,\mathrm dx + (A^0_i V^i + A^0_0)\mathrm dt,
  \qquad \mathrm d\tilde{t}=\mathrm dt;
\end{split}
\end{equation}
\item the group of general projective reciprocal transformations~\cite{FPV16},
  which is generated by a $t$-fixing projective reciprocal transformation as
  above and an $x \leftrightarrow t$ inversion.
\end{itemize}

It is clear that each of the above groups is a subgroup of the next one in the
list.  We will use the above groups to classify bi-Hamiltonian systems of
WDVV-type in low dimensions.

It was proved in \cite{FPV17:_system_cl} that a first-order system of
conservation laws \eqref{eq:ConsHydroSys} admits a third-order homogeneous Hamiltonian
operator (in a canonical form~\eqref{eq:18}) if and only if
\begin{subequations}\label{eq:1}
  \begin{align}%\label{eq:2}
    & f_{is}V^{s}_{j}=f_{js}V^s_{i},\\
   % \label{eq:3}
    & c_{skj}V^s_{i}+c_{sik}V^s_{j}+c_{sji}V^s_{k}=0,\\
   % \label{eq:4}
    &f_{ks}V^s_{ij} = c_{ksj}V^s_{i} + c_{ksi}V^s_{j}.
  \end{align}
\end{subequations}

It is important to recall that systems of the type~\eqref{eq:ConsHydroSys} that are
Hamiltonian with respect to third-order operators are linearly degenerate (or
weakly nonlinear, in another terminology) and non-diagonalisable
\cite{FPV17:_system_cl}.

The system~\eqref{eq:1} can be used in two ways.  Given a conservative system
defined by $(V^i)$ one can determine the matrix~$f$ parameterising a
third-order operator $R$.  Since the coefficients of $f$ are second-degree
polynomials, the problem becomes algebraic in nature.  On the other hand, given
the matrix~$f$ one can determine all the systems that are Hamiltonian with
respect to the corresponding operator~$R$. This problem is completely solved
in~\cite{FPV17:_system_cl}: we have $V^i = \psi^i_\gamma W^\gamma$, where
$W^\gamma=\eta^\gamma_s u^s + \xi^\gamma$, and the constants $\eta^\gamma_m$
and $\xi^\gamma$ fulfill a certain linear algebraic system.

Therefore, given a Doyle--Pot\"emin operator~$R$ it is an algebraic problem
given by the system~\eqref{eq:1} to find all quasilinear systems in a
conservative form that are Hamiltonian with respect to the operator~$R$.  We
simplify the obtained family of systems by equivalence transformations to get a
subfamily~$\mathcal S$ of nonlinear systems,
$
  u^i_t=V^i_ju^j_x.
$
In particular, every third-order homogeneous Hamiltonian
operator in dimension~one can be transformed to $\mathrm D_x^3$ using
transformations of the dependent variables only. But the quasilinear
conservative systems that are Hamiltonian with respect to $\mathrm D_x^3$ are
linear due to~\eqref{eq:1}.  Therefore, we omit altogether the dimension~one
in the classification below.

Note that also systems as above do not change their form when subject to a
general projective reciprocal transformation, see \cite{FPV17:_system_cl}.

Finally, we check if the family~$\mathcal S$ admits also a Ferapontov
operator~$P$.  It is done by direct computation, assisted by computer algebra.

We use the packages from the paper~\cite{m.20:_weakl_poiss} in order to compute
Schouten bracket $[P,R]$, with given~$R$ and unknown~$P$ (when possible).
Namely, we collect all coefficients of the Schouten bracket as a three-vector
and require them to vanish, add to this system the system of conditions under
which~$P$ is Hamiltonian.  If we can solve this system, we have an answer.

Otherwise, we assume Conjecture~\ref{conj:comp-g} to hold.  Finding a
Ferapontov operator thus reduces to the three following steps.  Firstly, we
solve the commutativity condition $g^{is}V^j_s=g^{js}V^i_s$.  Secondly, we
solve the commutativity condition $\Gamma^{sij}V^k_s = \Gamma^{skj}V^i_s$
coming from~\eqref{eq:185} and the fact that the system~$\mathcal S$ is in a
conservative form, where
$\Gamma^{lij}=\frac12(g^{is}g^{jl}_{,s} + g^{ls}g^{ji}_{,s} -
g^{js}g^{il}_{,s})$.  At this point, the matrix~$Q$ and therefore~$g$ is
completely determined and to find the matrix~$(c^{\alpha\beta})$ we solve the
system~\eqref{eq:20}.

Strictly speaking, we do not have a proof of the fact that the operator~$P$
preserves its shape under a general projective reciprocal transformation
thus remaining of Ferapontov type. However, the transformed operator will be
again homogeneous of degree $1$, and computational experiments show that it
will be again of the same type. We conjecture that this is a general property.

\section{Two-component systems}
\label{sec:cases-n=1-n=2}

\subsection{Affine classification}
\label{sec:case-n=2-affine}

It was shown in~\cite{FPV14} that up to affine transformations
$\tilde u^i=A^i_j u^j + A^i_0$, $i=1$, \dots, $n$, there are three distinct
third-order homogeneous Hamiltonian operators,
\begin{gather*}
R^{(1)}=\begin{pmatrix} 1 & 0 \\ 0 & 1 \end{pmatrix}\mathrm D_x^3,\quad
R^{(2)}=\mathrm D_x\begin{pmatrix} 0 & \mathrm D_x\frac1{u^1} \\
  \frac1{u^1}\mathrm D_x & \frac{u^2}{(u^1)^2}\mathrm D_x+\mathrm D_x\frac{u^2}{(u^1)^2}
\end{pmatrix}\mathrm D_x,
\\
R^{(3)}=\mathrm D_x\begin{pmatrix} \mathrm D_x & \mathrm D_x\frac{u^2}{u^1} \\
  \frac{u^2}{u^1}\mathrm D_x &
  \frac{(u^2)^2+1}{2(u^1)^2}\mathrm D_x+\mathrm D_x\frac{(u^2)^2+1}{2(u^1)^2}
\end{pmatrix}\mathrm D_x.
\end{gather*}

The operator~$R^{(1)}$ is admitted by linear systems of PDEs and therefore we
pay no attention to it here.

The operator~$R^{(2)}$ is admitted by a family of quasilinear systems
\begin{equation*}%\label{eq:5}
  u^1_t=(\alpha u^1 + \beta u^2)_x,\quad
  u^2_t=\left(\alpha u^2+\frac{\beta(u^2)^2+\gamma}{u^1}\right)_x.
\end{equation*}
There are two inequivalent cases: $(\beta,\alpha)=(1,0)$ and
$(\beta,\alpha)=(0,1)$.  The second case is degenerate in a sense that the
corresponding system is partially coupled and one equation can be solved
explicitly, effectively making the second one linear. So, we will focus on the
first case only.

\begin{theorem}\label{th:case2}
The system
\[
 u^1_t=u^2_x,\quad
 u^2_t=\left(\frac{(u^2)^2+\gamma}{u^1}\right)_x
\]
is Hamiltonian with respect to three first-order local homogeneous Hamiltonian
operators $P^{(2,i)}$, $i=1$, $2$, $3$, parameterised by the metrics
\begin{gather*}
  g^{(2,1)}=\begin{pmatrix} - u^1 & 0 \\
    0 &  \frac{(u^2)^2+\gamma}{u^1} \end{pmatrix},
  \quad
  g^{(2,2)}=\begin{pmatrix} 0 & u^1
    \\ u^1 & 2u^2 \end{pmatrix},
  \\
  g^{(2,3)}=\begin{pmatrix} 2u^2 & \frac{(u^2)^2+\gamma}{u^1} \\
    \frac{(u^2)^2+\gamma}{u^1} & 0 \end{pmatrix}.
\end{gather*}
The first-order operators above are mutually compatible as well as compatible
with the operator~$R^{(2)}$.
\end{theorem}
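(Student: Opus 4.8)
The plan is to treat the three operators on an equal footing and to reduce every assertion to the explicit Hamiltonicity conditions \eqref{eq:17}, the representation of the flow through a hydrodynamic Hamiltonian, and the Schouten-bracket formula of Subsection~\ref{sec:conj-struct-g_1}, all specialised to $n=2$ where the tensors stay scalar-sized. First I would establish that each $g^{(2,i)}$ defines a genuine \emph{local} first-order Hamiltonian operator. Locality forces $c^{\alpha\beta}=0$ in \eqref{eq:17}, so \eqref{eq:185}--\eqref{eq:19.5} become vacuous and only the first line of \eqref{eq:17}, together with the flatness condition $R^{ij}_{kl}=0$ (the $c^{\alpha\beta}=0$ specialisation of \eqref{eq:20}), survives. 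For each of the three metrics I invert it, compute the Levi-Civita Christoffel symbols and raise them via $\Gamma^{ij}_k=-g^{im}\Gamma^j_{mk}$; this makes $g^{ij}_{,k}=\Gamma^{ij}_k+\Gamma^{ji}_k$ and $g^{is}\Gamma^{jk}_s=g^{js}\Gamma^{ik}_s$ hold identically. The only substantive check is flatness, which for $n=2$ collapses to the single scalar equation $R^{12}_{12}=0$ (vanishing Gaussian curvature); I verify it for each metric and read off the operators $P^{(2,i)}$ with components $g^{ij}\mathrm D_x+\Gamma^{ij}_s u^s_x$.

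Next I would exhibit a Hamiltonian for the system with respect to each operator. For a hydrodynamic density $h(u)$ and $H=\int h\,\mathrm dx$ one computes $P^{ij}\,\delta H/\delta u^j=g^{ij}(\nabla_j\nabla_k h)\,u^k_x$, so matching the flow $u^i_t=V^i_k u^k_x$, where $V^1=u^2$ and $V^2=((u^2)^2+\gamma)/u^1$, reduces to solving the covariant-Hessian system $\nabla_l\nabla_k h=g_{li}V^i_k$. Its right-hand side is symmetric in $l,k$ precisely because the system commutes with the metric, $g^{is}V^j_s=g^{js}V^i_s$, and flatness guarantees the integrability of this system; integrating it twice produces the density $h$ for each operator and confirms the Hamiltonian representation.

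Finally comes compatibility. The pairwise conditions $[P^{(2,i)},P^{(2,j)}]=0$ can be checked either by the same Schouten-bracket computation used below or, more cheaply, by the classical flat-pencil criterion: one verifies that $g^{(2,i)}+\lambda g^{(2,j)}$ is flat for every $\lambda$ and that its Levi-Civita connection is the matching affine combination of the two individual connections, both of which are finite polynomial identities in the two field variables. The genuinely new step is the cross-order compatibility $[P^{(2,i)},R^{(2)}]=0$, for which no classical shortcut is available: here I substitute $P^{(2,i)}$ and $R^{(2)}$ into the Schouten-bracket formula of Subsection~\ref{sec:conj-struct-g_1}, reduce the result modulo total $x$-derivatives, and require every coefficient of the independent jet monomials to vanish. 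I expect this to be the main obstacle in principle, since it couples a (potentially nonlocal) first-order operator with a third-order one; but in the local two-component setting the expressions stay small, so with the packages of \cite{m.20:_weakl_poiss} the remaining difficulty is routine bookkeeping rather than a conceptual gap.
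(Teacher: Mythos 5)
Your proposal is correct and would establish Theorem~\ref{th:case2}, but it runs in the opposite direction to the paper's argument. The paper does not verify the three displayed metrics one by one: it takes $R^{(2)}$ as given, writes a first-order operator $P$ with \emph{unknown} coefficients, and imposes simultaneously the Hamiltonianity conditions \eqref{eq:17} and the vanishing of the Schouten bracket $[P,R^{(2)}]=0$ computed with the algorithm of \cite{CLV19} and the packages of \cite{m.20:_weakl_poiss}; the resulting overdetermined nonlinear PDE system is then solved in Maple via \texttt{rifsimp} and \texttt{pdsolve} (see Subsections~\ref{sec:comp-probl-their} and~\ref{sec:class-bi-hamilt} --- $n=2$ is precisely one of the few cases where this direct solution is feasible). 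This derivation buys two things your verification route must pay for separately. First, it yields the \emph{whole} finite-dimensional solution space at once, showing the three metrics exhaust the local operators compatible with $R^{(2)}$; second, since the general solution is the linear family $c_1g^{(2,1)}+c_2g^{(2,2)}+c_3g^{(2,3)}$ and every member is Hamiltonian (cf.\ the unnumbered Proposition following Theorem~\ref{th:case3}, where the combined metric $Q$ is displayed for the analogous case), mutual compatibility of the three operators is automatic, whereas you must check the pairwise flat-pencil conditions by hand. Your flat-pencil criterion is a legitimate and more elementary substitute, and the paper's cross-check against the list of \cite{LSV:bi_hamil_kdv} (all local first-order operators compatible with $R^{(2)}$) plays essentially the role your verification does.

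One justification in your second step should be repaired: flatness of $g$ does \emph{not} by itself guarantee solvability of the covariant Hessian system $\nabla_l\nabla_k h=g_{li}V^i_k$. Besides the symmetry $g_{li}V^i_k=g_{ki}V^i_l$ that you correctly invoke, one needs the closedness of the right-hand side --- in flat coordinates, total symmetry of $\partial_m(g_{li}V^i_k)$, i.e.\ Tsarev's condition that $\nabla_m V_{lk}$ be symmetric in all indices. Since you propose to integrate explicitly, a failure of this condition would surface in the computation, so the step succeeds in practice for these systems; but as written the appeal to flatness is a non sequitur, and the symmetry of the covariant derivative of $V$ should be stated and checked as a separate hypothesis.
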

The operators $P^{(2,i)}$ belong to the list~\cite{LSV:bi_hamil_kdv}
of all first-order local homogeneous Hamiltonian operators that are compatible
with~$R^{(2)}$.

The operator~$R^{(3)}$ is admitted by a family of quasilinear systems
\begin{equation*}
  u^1_t=(\alpha u^1 + \beta u^2)_x,\quad
  u^2_t=\left(\alpha u^2+\frac{\beta (u^2)^2+\gamma u^2 - \beta}{u^1}\right)_x.
\end{equation*}
Again, there are two inequivalent cases: $(\beta,\alpha)=(1,0)$ and
$(\beta,\alpha)=(0,1)$, and we will only deal with the first one.
\begin{theorem}\label{th:case3}
The system
\[
 u^1_t=u^2_x,\quad
  u^2_t=\left(\frac{(u^2)^2+\gamma u^2 - 1}{u^1}\right)_x
\]
is Hamiltonian with respect to three first-order local homogeneous Hamiltonian
operators $P^{(3,i)}$, $i=1,2,3$, parameterised by the metrics
\begin{gather*}
  g^{(3,1)}=\begin{pmatrix} -u^1 & 0 \\
    0 & \frac{(u^2)^2+\gamma u^2-1}{u^1} \end{pmatrix},
  \quad
  g^{(3,2)}=\begin{pmatrix} 0 & u^1
    \\ u^1 & 2u^2+\gamma \end{pmatrix},
  \\
  g^{(3,3)}=\begin{pmatrix} 2u^2+\gamma &  \frac{(u^2)^2+\gamma u^2 - 1}{u^1} \\
    \frac{(u^2)^2+\gamma u^2 - 1}{u^1} & 0 \end{pmatrix}.
\end{gather*}
The first-order operators above are mutually compatible as well as compatible
with the operator~$R^{(3)}$.
\end{theorem}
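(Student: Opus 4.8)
The plan is to verify Theorem~\ref{th:case3} by direct computation, following exactly the same strategy that works for Theorem~\ref{th:case2}, since the two systems differ only by the linear term $\gamma u^2$ inside the flux and the constant $-1$ replacing $\gamma$. Concretely, for the system $u^1_t=u^2_x$, $u^2_t=((u^2)^2+\gamma u^2-1)_x/u^1$ I would first read off the Jacobian matrix $(V^i_j)$ of the flux $V=(u^2,\ ((u^2)^2+\gamma u^2-1)/u^1)$, namely
\begin{equation*}
  (V^i_j)=\begin{pmatrix} 0 & 1 \\[0.3ex]
    -\dfrac{(u^2)^2+\gamma u^2-1}{(u^1)^2} & \dfrac{2u^2+\gamma}{u^1}\end{pmatrix}.
\end{equation*}
For each candidate metric $g^{(3,i)}$ the task splits into two independent checks: that the corresponding operator $P^{(3,i)}$ is genuinely Hamiltonian, and that the system is Hamiltonian with respect to it, i.e.\ that $u^i_t=(V^i)_x$ arises as $P^{(3,i)}$ applied to the variational derivative of some density.

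For the Hamiltonian-operator check I would use the criterion~\eqref{eq:17}. Since each $g^{(3,i)}$ is symmetric by inspection, I compute the contravariant Christoffel symbols $\Gamma^{ij}_k$ from $g^{ij}_{,k}=\Gamma^{ij}_k+\Gamma^{ji}_k$ together with the metricity/torsion-freeness conditions (equivalently, I invert $g^{(3,i)}$ to get $g_{ij}$, form the Levi-Civita symbols $\Gamma^i_{jk}$, and raise an index). The decisive point is that in two dimensions a metric gives a \emph{local} first-order operator precisely when it is flat, so I would verify $R^i_{jkl}=0$; given that these are the locally-flat metrics attached to a third-order structure this should hold, and then conditions~\eqref{eq:185}--\eqref{eq:20} are satisfied trivially with $c^{\alpha\beta}=0$. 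To exhibit the actual Hamiltonian I would solve the linear first-order conditions $g^{is}V^j_s=g^{js}V^i_s$ (the commutativity/conservativity requirement recorded in Subsection~\ref{sec:class-bi-hamilt}) to confirm $(V^i)$ lies in the image of $P^{(3,i)}$, and then integrate to produce a density $h^{(3,i)}$ with $V^i=g^{ij}\p h/\p u^j+\cdots$; in practice this is a short integration once flatness is established.

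The remaining and most substantial part is proving the compatibility statements: that $P^{(3,1)},P^{(3,2)},P^{(3,3)}$ are pairwise compatible and each compatible with $R^{(3)}$. Compatibility of two first-order operators reduces to the vanishing of a single tensorial Schouten bracket $[P^{(3,i)},P^{(3,j)}]=0$, which for Dubrovin--Novikov operators is a finite algebraic/differential condition on the two flat metrics and their connections; I would check this directly. The genuinely heavy step is $[P^{(3,i)},R^{(3)}]=0$, a mixed first-order/third-order bracket that I would evaluate with the algorithm of~\cite{CLV19} as implemented in~\cite{m.20:_weakl_poiss}, collecting the coefficients of the resulting three-covector expression and confirming that each vanishes modulo total $x$-derivatives. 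I expect this last verification to be the main obstacle, not conceptually but computationally, because of the expression swell described in Subsection~\ref{sec:comp-probl-their}; the saving grace here is that $n=2$ and $R^{(3)}$ is \emph{local}, so the calculation stays well within reach, and the answer is simply read off as the identical vanishing of every collected coefficient.
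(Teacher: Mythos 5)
Your proposal is correct and follows essentially the paper's own route: Theorem~\ref{th:case3} is established by direct computer-algebra computation, checking the Hamiltonian conditions \eqref{eq:17} (flatness of each metric with its Levi-Civita connection, $c^{\alpha\beta}=0$) and the vanishing of the relevant Schouten brackets via the algorithm of \cite{CLV19} as implemented in \cite{m.20:_weakl_poiss}, which is exactly what you describe. The only immaterial difference is that the paper arrives at the three metrics by solving the overdetermined system $[P,R^{(3)}]=0$ for an \emph{unknown} $P$ (using \texttt{rifsimp}/\texttt{pdsolve}) and cross-checks them against the list of operators compatible with $R^{(3)}$ in \cite{LSV:bi_hamil_kdv}, whereas you verify the given candidates directly; for the statement as formulated this verification suffices.
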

Again, the operators $P^{(3,i)}$ belong to the list~\cite{LSV:bi_hamil_kdv} of
all first-order local homogeneous Hamiltonian operators that are compatible
with $R^{(3)}$. To the best of our knowledge, the multi-Hamiltonian systems in
Theorems~\ref{th:case2} and~\ref{th:case3} are not known in the literature.

\begin{proposition}
  Each of the metrics $(g^{ij})_{i,j=1}^2$ of first-order local
  homogeneous Hamiltonian operators $P$ that are compatible with a third-order
  local homogeneous Hamiltonian operator $R$ (with Monge metric
  $f_{ij}=\phi_{\alpha\beta}\psi^\alpha_i\psi^\beta_j$), listed in
  \cite{LSV:bi_hamil_kdv}, can be factorised as in
  Conjecture~\ref{conj:comp-g}:
  \begin{equation*}
    \label{eq:11}
    g^{ij}=\psi^i_\alpha Q^{\alpha\beta}\psi^j_\beta,
  \end{equation*}
  where $Q=(Q^{\alpha\beta})$ is a symmetric matrix whose entries are quadratic
  polynomials of field variables.
\end{proposition}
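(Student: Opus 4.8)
The plan is to treat the statement, in dimension two, as what it is: a finite verification against the explicit classification of~\cite{LSV:bi_hamil_kdv}. First I would recall from~\cite{LSV:bi_hamil_kdv} the complete list of first-order local homogeneous Hamiltonian operators $P$ compatible with each of the three affine-canonical third-order operators $R^{(1)}$, $R^{(2)}$, $R^{(3)}$; equivalently, the complete list of admissible leading metrics $(g^{ij})$. For each fixed $R^{(k)}$ this is a short list of metric families (possibly carrying a constant parameter, such as the $\gamma$ of Theorems~\ref{th:case2} and~\ref{th:case3}), so the Proposition reduces to checking the factorisation predicted by Conjecture~\ref{conj:comp-g} on finitely many entries.

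For the reduction, for each $R^{(k)}$ I would read off its contravariant symbol, invert it to obtain the Monge metric $f_{ij}=(f^{ij})^{-1}$, and fix one explicit Monge decomposition $f_{ij}=\phi_{\alpha\beta}\psi^\alpha_i\psi^\beta_j$ with $\phi$ constant, symmetric and non-degenerate and $\psi^\alpha_i=\psi^\alpha_{is}u^s+\omega^\alpha_i$ affine. Its existence is guaranteed by the factorisation theorem recalled in Subsection~\ref{sec:third-order-hamilt}, and for $n=2$ the factors can be produced by inspection; for example, for $R^{(2)}$ one finds
\[
  f_{ij}=\begin{pmatrix} -2u^2 & u^1 \\ u^1 & 0 \end{pmatrix},
  \quad
  \psi^1=(-u^2,\,u^1),\quad \psi^2=(1,\,0),\quad
  \phi=\begin{pmatrix} 0 & 1 \\ 1 & 0 \end{pmatrix},
\]
with singular variety $\sigma=\det(\psi^\alpha_i)=-u^1$. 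Then, for each admissible metric $(g^{ij})$ from the list I would simply form the symmetric matrix $Q^{\alpha\beta}=\psi^\alpha_ig^{ij}\psi^\beta_j$ and inspect its entries.

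The substance of the statement is that these entries are \emph{genuine} polynomials of degree at most two, even though $\psi^\alpha_i$ has degree one and $g^{ij}$ is in general rational with a pole along $\sigma$. A priori each $Q^{\alpha\beta}$ is a sum of terms of apparent degree up to three, some of them carrying a $\sigma^{-1}$ pole inherited from $g^{ij}$; what has to be verified is that in every case the spurious cubic parts and the polar parts cancel \emph{simultaneously}. This is exactly what happens: for instance, taking $g^{(2,1)}$ together with the above factorisation of $R^{(2)}$,
\[
  Q^{11}=g^{11}(\psi^1_1)^2+g^{22}(\psi^1_2)^2
  =-u^1(u^2)^2+\frac{(u^2)^2+\gamma}{u^1}(u^1)^2=\gamma u^1,
\]
and the remaining entries of $Q$ come out equally polynomial of degree at most two. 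I would carry out the same short computation for $g^{(2,2)}$, $g^{(2,3)}$, and for every entry attached to $R^{(1)}$ and $R^{(3)}$ in the list of~\cite{LSV:bi_hamil_kdv}.

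The main obstacle is not analytic but organisational and conceptual. There is no uniform a priori mechanism forcing the cancellation of poles and cubic terms — that is precisely why the general assertion remains Conjecture~\ref{conj:comp-g} rather than a theorem — so in dimension two the proof is necessarily the exhaustive check, and its value lies in confirming that the ansatz of Conjecture~\ref{conj:comp-g} is provably correct, and not merely heuristic, in the lowest nontrivial dimension. The one point demanding care is bookkeeping: one must use the list of~\cite{LSV:bi_hamil_kdv} in full, cover every parameter subcase, and make sure that the fixed Monge decomposition $f_{ij}=\phi_{\alpha\beta}\psi^\alpha_i\psi^\beta_j$ is the same one implicitly used in the definition of $Q^{\alpha\beta}$, since a different (equally valid) choice of $(\psi,\phi)$ changes $Q$ by a constant congruence and must be tracked consistently.
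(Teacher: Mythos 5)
Your proposal is correct and matches the paper's own treatment: the proposition is established by direct verification of the finite list of metrics from \cite{LSV:bi_hamil_kdv}, fixing an explicit Monge decomposition $f=\Psi\Phi\Psi^\top$ for each $R^{(k)}$ and checking that $Q^{\alpha\beta}=\psi^\alpha_i g^{ij}\psi^\beta_j$ is quadratic, with poles along $\sigma$ and apparent cubic terms cancelling — exactly as in your $g^{(2,1)}$ computation. The only cosmetic difference is that the paper organises the check via a general linear combination (e.g.\ $g=c_1g^{(3,1)}+c_2g^{(3,2)}+c_3g^{(3,3)}$ for $R^{(3)}$), which by linearity of $g\mapsto Q$ is equivalent to your metric-by-metric verification.
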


As an example, the Monge metric $f^{(3)}$ of the operator~$R^{(3)}$ admits the
decomposition~\eqref{eq:12} with
\begin{equation*}
h_3=\begin{pmatrix} (u^2)^2+1 & -u^1u^2 \\ -u^1u^2 & (u^1)^2 \end{pmatrix},\quad
\Psi=\begin{pmatrix} -u^2 & 1 \\ u^1 & 0 \end{pmatrix},\quad
\Phi=\begin{pmatrix} 1 & 0 \\ 0 & 1 \end{pmatrix},
\end{equation*}
while the Hamiltonian operator of Dubrovin--Novikov type parameterised by the
metric $g:=c_1g^{(3,1)}+c_2g^{(3,2)}+c_3g^{(3,3)}$ admits the decomposition~\eqref{eq:11} with
\begin{multline*}
Q =
  \left(
\begin{matrix}
\gamma(c_2{-}c_3) (u^2)^2{+}(c_1\gamma u^2 {-}c_1)u^1{+}2c_3u^2
&
c_2(u^1)^2{+}c_1u^1u^2{-}c_3 (u^2)^2{-}c_3\\
c_2(u^1)^2{+}c_1u^1u^2{-}c_3 (u^2)^2{-}c_3
&
-c_1u^1+2c_3 u^2+c_3\gamma
\end{matrix}
\right).
\end{multline*}

\subsection{Projective classification}
\label{sec:case-n=3-projective}

It follows from the works of Agafonov and Ferapontov
\cite{agafonov96:_system,agafonov99:_theor,agafonov01:_system_templ} that the
quasilinear first-order systems that are Hamiltonian with respect to a
third-order homogeneous Hamiltonian operator~\cite{FPV17:_system_cl} correspond
to \emph{linear line congruences}. These are algebraic varieties in the
Pl\"ucker embedding of the Grassmannian of lines in $\mathbb{P}^{n+1}$.

The classical works of Castelnuovo imply that all such congruences can be
transformed to a single one (see~\cite{FPV17:_system_cl}) when $n=2$, which
implies that both operators~$R^{(2)}$ and~$R^{(3)}$ can be transformed to
$R^{(1)}$ by a reciprocal projective transformation.  Thus, there is no
interesting case in dimension~$2$ under the action of the above group.

More precisely, we have the following result.
\begin{proposition}
  Any two-component quasilinear system in a conservative form is linearisable
  by a reciprocal projective transformation.
\end{proposition}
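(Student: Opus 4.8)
The strategy is to exploit the classical fact that every two-component quasilinear system is linearised by the hodograph transformation, and then to recognise this exchange of dependent and independent variables as an element of the group of projective reciprocal transformations. I would begin by writing the system as $u^i_t = V^i_s(u)\,u^s_x$, $i=1,2$, where $V^i_s=\partial V^i/\partial u^s$, and note that the conservative form furnishes exactly two conservation laws, $du^1 = u^1_x\,dx + (V^1)_x\,dt$ and $du^2 = u^2_x\,dx + (V^2)_x\,dt$, whose closed one-forms are the raw material out of which the reciprocal transformations \eqref{eq:29} are built.

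Next I would pass to the hodograph picture, taking $(u^1,u^2)$ as new independent variables and $(x,t)$ as new dependent variables. On the open set where the Jacobian $\partial(u^1,u^2)/\partial(x,t)$ is nondegenerate, inverting it by the chain rule expresses each $u^i_x$, $u^i_t$ through the entries of the inverse matrix; substituting into $u^i_t=V^i_s u^s_x$ and clearing the common Jacobian factor yields
\[
  x_{u^1} = V^2_1\,t_{u^2} - V^2_2\,t_{u^1},\qquad
  x_{u^2} = V^1_2\,t_{u^1} - V^1_1\,t_{u^2}.
\]
These are linear first-order equations for the unknowns $x,t$ regarded as functions of $(u^1,u^2)$, with coefficients $V^i_j(u)$; hence the transformed system is linear. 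This step is general and requires no hypothesis of linear degeneracy or diagonalisability, which is precisely why the statement holds for \emph{any} two-component conservative system.

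It then remains to assemble this exchange of variables from the generators of the projective reciprocal group. The two conservation laws above provide potentials that serve as new independent variables: applying a $t$-fixing transformation of the form \eqref{eq:29} to introduce the first potential, composing with the $x\leftrightarrow t$ inversion that generates the full group, and repeating with the second conservation law, one builds up exactly the hodograph exchange. At each stage the densities $u^i$ and $\Delta=A^0_ju^j+A^0_0$ are affine in the field variables and the fluxes are the conserved currents $V^i$ of the system, so the transformations stay of the required projective reciprocal type. Geometrically this matches the picture recalled above: the system corresponds to a congruence of lines in $\mathbb{P}^3$, projective reciprocal transformations act as projective transformations of $\mathbb{P}^3$ on the congruence, and Castelnuovo's classification collapses all such congruences to the one associated with a linear system.

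The main obstacle is the bookkeeping of this last step: one must verify that the hodograph exchange can genuinely be realised as a composition of the generators \eqref{eq:29} and the $x\leftrightarrow t$ inversion, keeping every coefficient affine or projective, rather than being merely a general reciprocal transformation outside the prescribed group. One must also keep track of the nondegeneracy conditions on the relevant Jacobians, so that the conclusion is understood as a local statement on the open set where the exchange of variables is admissible.
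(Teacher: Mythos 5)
Your strategy hinges on realising the hodograph exchange as a composition of the generators of the projective reciprocal group, and that step --- which you yourself flag as ``bookkeeping'' --- is not a gap in bookkeeping but a step that fails outright. Every element of the group generated by \eqref{eq:29} and the $x\leftrightarrow t$ inversion has a rigid structure (displayed explicitly in the paper's section on the general projective classification): the new independent variables are potentials of $1$-forms such as $\Delta\,\mathrm dx+(A^0_iV^i+A^0_0)\,\mathrm dt$, whose density and flux are functions of $u$ alone and affine in $u$, while the new dependent variables are always projective-linear functions of the old $u$'s. Consequently no composition of generators can ever make $u^1,u^2$ the independent variables, nor $x,t$ the dependent ones, which is exactly what the hodograph requires. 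Your ``raw material'' identification already goes wrong at this point: the conservation laws furnished by \eqref{eq:ConsHydroSys} are the closed forms $u^i\,\mathrm dx+V^i\,\mathrm dt$, not $\mathrm du^i=u^i_x\,\mathrm dx+V^i_su^s_x\,\mathrm dt$, whose coefficients depend on $u_x$ and are therefore inadmissible in \eqref{eq:29}. There is also a mismatch of conclusions: the hodograph produces a linear system \emph{for $x$ and $t$ as functions of $(u^1,u^2)$}, whereas the proposition asserts equivalence to a linear system \emph{within the same class}, i.e.\ a conservative evolutionary system $\tilde u^i_{\tilde t}=(\tilde V^i)_{\tilde x}$ with $\tilde V$ affine in $\tilde u$ (this is the sense in which ``linear'' is used for the systems attached to $R^{(1)}$ and $R^{(6)}$). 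So even granting the classical hodograph linearisation, it does not yield the statement.

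The failure is structural, not reparable by cleverer composition: the group is finite-dimensional (it acts as projective transformations of $\mathbb{P}^3$ on the line congruence associated with the system), so its orbit through linear systems is finite-dimensional, and it cannot linearise the infinite-dimensional family of \emph{all} two-component conservative systems. Your remark that the argument ``requires no hypothesis of linear degeneracy or diagonalisability'' is thus a warning sign rather than a strength: read in the context of the section, the systems at stake are those of the classification, i.e.\ those Hamiltonian with respect to a third-order homogeneous operator (hence linearly degenerate and non-diagonalisable), which correspond to linear line congruences in $\mathbb{P}^3$ under the Agafonov--Ferapontov correspondence. The paper's proof is precisely the projective-geometric argument you relegate to a closing aside: reciprocal projective transformations act as $\mathrm{SL}(4)$ on the congruence, and Castelnuovo's classification reduces all the relevant congruences for $n=2$ to a single one, whose associated system is linear (equivalently, $R^{(2)}$ and $R^{(3)}$ are transformed to $R^{(1)}$, and conservative systems Hamiltonian with respect to $\mathrm D_x^3$ are linear by \eqref{eq:1}). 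In short: the geometric ``decoration'' at the end of your proposal is the actual proof, and the hodograph machinery that carries the weight in your argument lies outside the allowed group.
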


It can be very difficult to find the above linearizing reciprocal projective
transformation explicitly, even in this low-dimensional case.
Alternatively, as a canonical case of the classification one can take a physically relevant system.
Thus, the Chaplygin gas system
\[
u_t+uu_x+\frac{v_x}{v^3}=0,\quad
v_t+(uv)_x=0,
\]
is known~\cite{mokhov98:_sympl_poiss} to admit three first-order Dubrovin--Novikov Hamiltonian operators.
Moreover, as a subcase of a polytropic gas dynamics system,
\[
u_t+uu_x+v^\gamma v_x=0,\quad
v_t+(uv)_x=0,
\]
it admits~\cite{OlverNutku1988} a nonhomogeneous third-order Hamiltonian operator.
But it also admits a homogeneous one. Indeed, the diagonalised form of the Chaplygin gas system~\cite{mokhov98:_sympl_poiss} is
\[
U_t=VU_x,\quad V_t=UV_x,\quad\left(\text{with}\quad U=u-\frac1v,\quad V=u+\frac1v\right)
\]
to which the system
\[
 u^1_t=u^2_x,\quad
 u^2_t=\left(\frac{(u^2)^2-a^2}{u^1}\right)_x
\]
is reduced with the help of the point transformation of the dependent variables
\[
u^1 = \frac{2a}{U-V},\quad u^2 = \frac{a(U+V)}{U-V},
\]
and the system
\[
 u^1_t=u^2_x,\quad
  u^2_t=\left(\frac{(u^2)^2+\gamma u^2 - 1}{u^1}\right)_x
\]
is reduced with the help of the point transformation of the dependent variables
\[
u^1 = -\frac{\sqrt{\gamma^2+4}}{U-V}, \quad
u^2 = -\frac{\sqrt{\gamma^2+4}}2\frac{U+V}{U-V}-\frac\gamma2.
\]
Since both pairs of transformations preserve the homogeneity of Hamiltonian
operators, the Chaplygin gas system possesses a homogeneous third-order
Hamiltonian operator (which is not in Doyle--Pot\"emin canonical form).

\section{Three-component systems}
\label{sec:case-n=3-projective-1}

An affine classification is no longer feasible when $n>2$, due to a large
amount of cases and subcases that we would have. We resort to two distinct
projective classifications.

\subsection{Partial projective classification}
\label{sec:proj-class-with}

Under the action of projective reciprocal transformations that fix~$t$, there
are six classes of third-order homogeneous Hamiltonian operators $R^{(i)}$,
see~\cite{FPV14}. They are defined by the following Monge metrics,
\begin{gather*}
  f^{(1)}=\begin{pmatrix} (u^{2})^{2}+\mu & -u^{1}u^{2}-u^{3} & 2u^{2} \\
    -u^{1}u^{2}-u^{3} & (u^{1})^{2}+\mu(u^{3})^{2} & -\mu u^{2}u^{3}-u^{1} \\
    2u^{2} & -\mu u^{2}u^{3}-u^{1} & \mu(u^{2})^{2}+1
  \end{pmatrix},
  \\[1ex]
   f^{(2)} = \begin{pmatrix}
    (u^{2})^{2}+1 & -u^{1}u^{2}-u^{3} & 2u^{2} \\
    -u^{1}u^{2}-u^{3} & (u^{1})^{2} & -u^{1} \\
    2u^{2} & -u^{1} & 1
  \end{pmatrix},
  \
   f^{(3)} = \begin{pmatrix}
    (u^{2})^{2}+1 &  -u^{1}u^{2}&0 \\
    -u^1u^2 & (u^1)^2 & 0 \\
    0 & 0 & 1%
  \end{pmatrix},
\\[1ex]
   f^{(4)}= \begin{pmatrix} -2u^2 & u^1 & 0
    \\
    u^1 & 0 & 0
    \\
    0 & 0 & 1
  \end{pmatrix},
  \quad
   f^{(5)}=\begin{pmatrix} -2u^2 & u^1 & 1
    \\
    u^1 & 1 &0
    \\
    1 & 0 & 0
  \end{pmatrix},
  \quad
   f^{(6)} =
  \begin{pmatrix}
    1 & 0 & 0\\ 0 & 1 & 0\\ 0 & 0 & 1
  \end{pmatrix}.
\end{gather*}
The corresponding quasilinear first-order systems of evolutionary PDEs have
been found and described in \cite{FPV16,FPV17:_system_cl}.

\paragraph{Case $R^{(6)}$.} The system which is Hamiltonian with respect to~$R^{(6)}$ is linear, hence it is out of consideration here.

\paragraph{Case $R^{(5)}$.}  The conservative quasilinear system of PDEs that
is determined by the operator~$R^{(5)}$ is~\eqref{eq:27}, which is equivalent
to a WDVV equation. The bi-Hamiltonian pair for such a system was found
in~\cite{FGMN97}, see Introduction.

\paragraph{Case $R^{(4)}$.} The system of PDEs that is determined by~$R^{(4)}$
is
  \begin{equation*}%\label{eq:31}
    u^1_t=u^2_x,\quad u^2_t=\left( \frac{(u^2)^2+u^3}{u^1}\right)_x,
    \quad u^3_t=u^1_x,
\end{equation*}
Setting $u^1=f_{xxt}$, $u^2=f_{xtt}$, $u^3=f_{xxx}$ we obtain
$f_{xxx} =f_{ttt}f_{xxt}- f_{xtt}^2$, which is equivalent to the WDVV
equation~\eqref{eq:26} under the interchange of~$x$ and~$t$.  Its
bi-Hamiltonian representation by means of a compatible pair~$P$ and~$R^{(4)}$
as in this paper was constructed in~\cite{kalayci97:_bi_hamil_wdvv,kalayci98:_alter_hamil_wdvv}.

\paragraph{Case $R^{(3)}$.} The integrability of the system of PDEs determined
by~$R^{(3)}$,
  \begin{equation*}
    u^1_t=(u^2+u^3)_x, \quad u^2_t=\left(\frac{u^2(u^2+u^3)-1}{u^1}\right)_x,
    \quad
    u^3_t=u^1_x,%\label{eq:332}
\end{equation*}
was first determined in~\cite{agafonov98:_linear}. Its third-order Hamiltonian
structure was found
in~\cite{FPV17:_system_cl}. In~\cite{vasicek21:_wdvv_hamil}, a criterion
from~\cite{bogoyavlenskij96:_neces_hamil} was used to find the metric~$g$ of a
first-order operator. Namely, for non-diagonalisable quasilinear systems of
PDEs, the metric~$g$ is proportional to a contraction of the square of the
Haantjies tensor of the velocity matrix $(V^i_j)$ of the system:
\begin{equation*}
%  \label{eq:33}
  g_{ij} = f(u) H^\alpha_{i\beta}H^{\beta}_{j\alpha}
\end{equation*}
(see~\cite{bogoyavlenskij96:_neces_hamil} for the definition of the Haantjies tensor).

What was new is that in this case the first-order homogeneous Hamiltonian
operator $P$ was nonlocal, $c^{11}=c^{22}=-1$, $c^{12}=c^{21}=0$.  The metric
is written down in~\cite{m.20:_weakl_poiss}, and here we present it in accordance
with Conjecture~\ref{conj:comp-g}.  The Monge metric $f^{(3)}$ admits a Monge
decomposition $(f^{(3)}_{ij})=\Psi\Phi\Psi^{\top}$, where the metric
$g=\Psi^{-1} Q(\Psi^{-1})^\top$, where
\begin{gather*}
Q^{11}=4(u^1)^2+(u^2)^2+1,\quad
Q^{12}=-3u^1,\quad
Q^{13}=-2u^2-u^3,\\
Q^{22}=(u^1)^2+(u^3)^2+4,\quad
Q^{23}=u^1(u^2+2u^3),\\
Q^{33}=(u^1)^2+(u^2+2u^3)^2+1,\\
\Psi=(\psi^\alpha_i) = \begin{pmatrix}
-u^2 & 0 & 1 \\
u^1 & 0 & 0 \\
0 & 1 & 0
\end{pmatrix},\quad
\Phi=(\phi_{\alpha\beta})=\begin{pmatrix}
1 & 0 & 0 \\
0 & 1 & 0\\
0 & 0 & 1
\end{pmatrix}.
\end{gather*}

The system of PDEs that is determined by $R^{(3)}$ is equivalent to a
particular WDVV equation~\cite{D96} obtained when $\eta=\text{Id}$, the
identity matrix.  Setting $u^1=f_{xxt}$, $u^2=f_{xtt}-f_{xxx}$, $u^3=f_{xxx}$,
we have
\[f_{xxt}^2 -f_{xxx}f_{xtt}+ f_{xtt}^2-f_{xxt}f_{ttt}-1=0.\]

At this point, it is natural to ask the question if in the case $n=3$ all the
items in the projective classification of operators~$R$ determine WDVV systems
of PDEs.  The answer is negative.
Indeed, WDVV equations in dimension~three were classified
in~\cite{mokhov18:_class_hamil}, and in~\cite{vasicek21:_wdvv_hamil} the
bi-Hamiltonian pairs for each member of the classification were found to be the
pairs considered in this paper.  It turned out that the projective classes of
the third-order homogeneous Hamiltonian operators $R$ were that of~$R^{(3)}$,
$R^{(4)}$, $R^{(5)}$. It means that the systems of PDEs that are determined
by~$R^{(1)}$ and~$R^{(2)}$ \emph{are not} WDVV systems.  That is one of the
main motivations for this paper: there are more WDVV-type equations than WDVV
systems.

\paragraph{Case $R^{(2)}$.}
A Monge decomposition $(f^{(2)}_{ij})=\Psi\Phi\Psi^{\top}$ of the Monge
metric~$f^{(2)}$ is given by
\[
\Psi=(\psi^\alpha_i) = \begin{pmatrix}
u^2 & 0 & 1 \\
-u^1 & -u^3 & 0 \\
1 & u^2 & 0
\end{pmatrix},\quad
\Phi=(\phi_{\alpha\beta})=\begin{pmatrix}
1 & 0 & 0 \\
0 & 0 & 1\\
0 & 1 & 1
\end{pmatrix}.
\]
The system $u^i_t=(V^i)_x$ admitted by $R^{(2)}$ is (up to a transformation in
the stabilizer of $R^{(2)}$ that reduces the number of constants)
\begin{equation}\label{eq:35}
\begin{split}
  u^1_t =& (\alpha u^2+\beta u^3)_x,
  \\
  u^2_t=& \left(\frac{((u^2)^2-1)(\alpha u^2+\beta u^3)-(\gamma +\delta  u^1)}{S}\right)_x,
  \\
  u^3_t= &\left(\frac{(u^2u^3-u^1)(\alpha u^2+\beta u^3)-u^1(\gamma +\delta u^1)}{S}\right)_x,
\end{split}
\end{equation}
where $S= u^1u^2-u^3$ is proportional to $\sqrt{\det(f^{(2)})}$ and
$\alpha, \beta, \gamma, \delta$ are arbitrary constants. The system admits a
Hamiltonian formulation through $R^{(2)}$ and the nonlocal Hamiltonian
\[
 H= \int \left(\frac\alpha2 u^3 (\mathrm D_x^{-1}u^2)^{2}
      +  \beta u^3 (\mathrm D_x^{-1}u^2)  (\mathrm D_x^{-1}u^3)
      -\frac\gamma{2} x^{2}u^1
      -  \delta x u^1 (\mathrm D_x^{-1}u^1)\right) \mathrm dx.
\]
Again, this system is linearly degenerate, and non-diagonalisable for generic
values of parameters; it is diagonalisable if and only if
$\alpha \delta - \beta \gamma=0$.

\begin{theorem}\label{th:case-r2}
  The system~\eqref{eq:35} has a unique Ferapontov-type Hamiltonian
  operator~$P$, which is compatible with~$R^{(2)}$, fulfills Conjecture~\ref{conj:comp-g} with the metric
  $(g^{ij})=\Psi^{-1} Q(\Psi^{-1})^\top$, $c^{11}=3$, $c^{12}=c^{21}=0$,
  $c^{22}=-\beta^2$.  Here, \iffalse
\begin{gather*}
Q^{11}=2(\alpha u^2+\beta u^3)^2+2(\beta u^1+\alpha)^2+8(\beta u^1+\alpha)(\delta u^3+\gamma u^2)\\
\hphantom{Q^{11}=}+4(\alpha u^2+\beta u^3)(\delta u^1+\gamma),\\
Q^{12}=6(\alpha u^2+\beta u^3)(\delta u^3+\gamma u^2)-2(\beta u^1+\alpha)(\delta u^1+\gamma)\\
Q^{13}=2(\beta u^1+\alpha)(2\alpha u^2+2\beta u^3+3\delta u^1+3\gamma )\\
Q^{22}=-2(2\alpha u^2+2\beta u^3+\delta u^1+\gamma)(2\alpha u^2+2\beta u^3+3\delta u^1+3\gamma )\\
Q^{23}=8(\alpha u^2+\beta u^3)^2+10(\alpha u^2+\beta u^3)(\delta u^1+\gamma)+2(\beta u^1+\alpha)(\gamma u^2+\delta u^3)\\
Q^{33}=-6(\alpha u^2+\beta u^3)^2+2(\beta u^1+\alpha)^2.
\end{gather*}
\fi
\begin{gather*}
Q^{11}=2(A^2+B^2+4BC+2AC),\quad
Q^{12}=2(3AD-BC),\\
Q^{13}=2B(2A+3C),\quad
Q^{22}=-2(2A+C)(2A+3C ),\\
Q^{23}=8A^2+10AC+2BD,\quad
Q^{33}=-6A^2+2B^2,\\
A=\alpha u^2+\beta u^3,\quad B=\beta u^1+\alpha,\quad C=\delta u^1+\gamma,\quad D=\delta u^3+\gamma u^2.
\end{gather*}
\end{theorem}

\paragraph{Case $R^{(1)}$.} In this subsection, $\mu^2\neq1$.
The Monge metric $f^{(1)}$ admits a Monge decomposition
$(f^{(1)}_{ij}) =\Psi\Phi\Psi^{\top}$, where
\[
(\psi^\alpha_i)=\begin{pmatrix}
u^2 & 0 & 1 \\
-u^1 & -u^3 & 0 \\
1 & u^2 & 0
\end{pmatrix},\quad
(\phi_{\alpha\beta})=\begin{pmatrix}
1 & 0 & 0 \\
0 & \mu & 1\\
0 & 1 & \mu
\end{pmatrix}.
\]
The system $u^i_t=(V^i)_x$ admitted by $R^{(1)}$ is (up to a transformation
in the stabilizer of $R^{(2)}$ that reduces the number of constants) is
\begin{equation}\label{eq:32}
  \begin{split}
    u^1_t = &(\alpha u^2+\beta u^3)_x,
    \\
    u^2_t = &\left(\frac{\left((u^2)^2-\mu\right)(\alpha u^2+\beta u^3) +\gamma (1-\mu(u^2)^2) +\delta(u^1-\mu u^2u^3)}{S}\right)_x,
    \\
    u^3_t= & \Big(\frac{\alpha u^3\left((u^2)^2-\mu\right)+\beta u^3(u^2u^3-\mu u^1)}{S}\\
    & \qquad+\frac{\gamma (u^1-\mu u^2u^3)+\delta \left((u^1)^2-\mu(u^3)^2\right)}{S}\Big)_x,
\end{split}
\end{equation}
where $S=u^1u^2-u^3$ is proportional to $\sqrt{\det(f^{(1)})}$, and
$\alpha, \beta, \gamma, \delta$ are arbitrary constants. Its
nonlocal Hamiltonian is
 $$
\begin{array}{c} H= {\displaystyle\int} \Big(
 \frac\alpha2
      (2\mu xu^1 \mathrm D_x^{-1}  u^{2}
        +u^3(\mathrm D_x^{-1}u^{2})^{2}
        + \mu    x^{2}u^3
      )
      +  \beta  u^3(1-\mu ^2) \mathrm D_x^{-1}u^{2}  \mathrm D_x^{-1}u^{3}
        \\
        \ \\
      +  \delta(
      xu^1\mathrm D_x^{-1}u^{1}
        +\mu u^3\mathrm D_x^{-1}u^{1}  \mathrm D_x^{-1}u^{2}
        +\mu u^1\mathrm D_x^{-1} u^{2}  \mathrm D_x^{-1}u^{3}
        +\mu xu^{3}  \mathrm D_x^{-1}u^3 )
        \\
        \ \\
      +\frac{1}{2}\gamma
      (\mu u^1(\mathrm D_x^{-1}u^{2})^{2}
        +x^2u^1
        +2\mu xu^3\mathrm D_x^{-1}  u^{2}
\Big) \mathrm dx.
      \end{array}
 $$
The system~\eqref{eq:32} is linearly degenerate, and non-diagonalisable
for generic values of parameters ($\alpha \delta - \beta \gamma\neq0$).

\begin{theorem}
  The system~\eqref{eq:32} has a unique first-order
  nonlocal Hamiltonian operator~$P$ that fulfills Conjecture~\ref{conj:comp-g}
  with the metric~$(g^{ij})=\Psi^{-1} Q(\Psi^{-1})^\top$ and is compatible with~$R^{(1)}$,
\iffalse
\begin{gather*}
  Q^{11}=\mu ^2(\beta u^1+\delta u^3)^2+\mu (\alpha u^2+\beta u^3)^2
  +\mu ^2(\gamma u^2+\alpha)^2+\mu (\delta u^1+\gamma)^2
\\
+2\mu ^2(\beta u^1+\delta u^3)(\gamma u^2+\alpha)
-2(\alpha u^2+\beta u^3)(\delta
u^1+\gamma)
\\
-4(\beta u^1+\alpha)(\delta u^3+\gamma u^2),
\\
Q^{12}=-(\mu ^2-1)(\mu (\beta u^1+\alpha)(\delta u^1+\gamma)
-(\delta u^3+\gamma u^2)(\alpha u^2+\beta u^3)),
\\
Q^{13}=-(\mu ^2-1)(\mu (\alpha u^2+\beta u^3)(2\beta u^1+\delta u^3+\gamma
u^2+2\alpha)
\\
-3(\delta u^1+\gamma)(\beta u^1+\alpha)),
\\
Q^{22}=-(\delta u^3+\gamma u^2)^2\mu ^3-\mu ^2(4(\alpha u^2+\beta u^3)^2
+(\delta u^1+\gamma)^2)
\\
+\mu (8(\alpha u^2+\beta u^3)(\delta u^1+\gamma)+(\delta u^3+\gamma u^2)^2)
-3(\delta u^1+\gamma)^2,
\\
Q^{23}=\mu ^2(-2(\alpha u^2+\beta u^3)(\delta
u^1+\gamma)+(u^1u^2-u^3)(-\alpha\delta+\beta\gamma))
\\
+4\mu ((\alpha u^2+\beta u^3)^2+(\delta u^1+\gamma)^2)
\\
-5(\alpha u^2+\beta u^3)(\delta u^1+\gamma)
-(\beta u^1+\alpha)(\delta u^3+\gamma u^2)
\\
Q^{33}=-\mu ^3(\beta u^1+\alpha)^2-\mu ^2((\alpha u^2+\beta u^3)^2
+4(\delta u^1+\gamma)^2)\\
+\mu ((\beta u^1+\alpha)^2+8(\alpha u^2+\beta u^3)(\delta u^1+\gamma))
-3(\alpha u^2+\beta u^3)^2.
\end{gather*}
\fi
\begin{gather*}
c^{11}= \mu^2+3,\quad c^{12}=c^{21}=-4\mu \delta,\quad c^{22}= \mu^3\beta^2 + 4\mu^2\delta^2 -\mu\beta^2,\\
Q^{11}=-(\mu^2-1)\left(\mu^2(A+C)^2+\mu (B^2+D^2)-2BD-4EF\right),
\\
Q^{12}=-(\mu^2-1)(\mu ED-FB),
\quad
Q^{13}=-(\mu^2-1)(\mu B(2E+F)-3DE),
\\
Q^{22}=-F^2\mu^3-\mu^2(4A^2+D^2)+\mu (8BD+F^2)-3D^2,
\\
Q^{23}=-\mu^2(2BD+(u^1u^2-u^3)(\alpha\delta-\beta\gamma))
+4\mu (B^2+D^2)-5BD-EF,
\\
Q^{33}=-\mu^3E^2-\mu^2(B^2+4D^2)+\mu (E^2+8BD)-3B^2,\\
A=\beta u^1+\delta u^3,\quad
B=\alpha u^2+\beta u^3,\quad
C=\gamma u^2+\alpha,\\
D=\delta u^1+\gamma,\quad
E=\beta u^1+\alpha,\quad
F=\delta u^3+\gamma u^2.
\end{gather*}

\end{theorem}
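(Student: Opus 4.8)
The plan is to follow the three-step algorithmic strategy described in Subsection~\ref{sec:class-bi-hamilt}: use the conclusion of Conjecture~\ref{conj:comp-g} as an ansatz for the leading metric, solve a chain of linear algebraic conditions to produce a unique candidate operator $P$, and then verify compatibility by direct computation, so that existence and compatibility are established unconditionally. I would start from the Monge decomposition $(f^{(1)}_{ij})=\Psi\Phi\Psi^\top$ recorded above and posit
\[
  g^{ij}=\psi^i_\alpha Q^{\alpha\beta}\psi^j_\beta,
  \qquad\text{equivalently}\qquad
  g=\Psi^{-1}Q(\Psi^{-1})^\top,
\]
with $(Q^{\alpha\beta})$ a symmetric $3\times3$ matrix whose entries are \emph{a priori} undetermined second-degree polynomials in $u^1,u^2,u^3$. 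This is the crux of the method: a generic Ferapontov metric would be an unknown function and the problem would be intractable at $n=3$, whereas the ansatz reduces the search for $g$ to finitely many unknown coefficients.

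Next I would impose that the velocity $V$ of the system~\eqref{eq:32} is a symmetry of $P$. From condition~\eqref{eq:185} with $w_1=V$ one obtains $g^{is}V^j_s=g^{js}V^i_s$, and from~\eqref{eq:185} together with conservativity one obtains $\Gamma^{sij}V^k_s=\Gamma^{skj}V^i_s$, where $\Gamma^{lij}=\tfrac12(g^{is}g^{jl}_{,s}+g^{ls}g^{ji}_{,s}-g^{js}g^{il}_{,s})$. Substituting the ansatz and collecting the coefficients of the monomials in $u$ turns both conditions into a homogeneous linear system in the coefficients of the $Q^{\alpha\beta}$. I expect this system to have a one-dimensional solution space; solving it fixes $Q$, hence $g$, up to the overall scalar that reflects the trivial normalization freedom of any Hamiltonian operator. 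This simultaneously produces the explicit $Q^{\alpha\beta}$ listed in the statement and yields the uniqueness of the metric within the ansatz class.

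With $g$ determined, the nonlocal tail follows from the curvature condition~\eqref{eq:20}, namely $R^{ij}_{kl}=c^{\alpha\beta}(w^i_{\alpha l}w^j_{\beta k}-w^i_{\alpha k}w^j_{\beta l})$ with $w_1=V$ and $w_2$ the $x$-translation ($w^i_{2j}=\delta^i_j$). Matching both sides coefficient-by-coefficient gives a linear system for $c^{11},c^{12},c^{22}$ whose unique solution is the triple displayed in the theorem; since $c^{11}=\mu^2+3\neq0$ the curvature of $g$ does not vanish, so the operator is genuinely nonlocal. Here the hypothesis $\mu^2\neq1$ ensures that $\Phi$, and hence the Monge metric $f^{(1)}$, is nondegenerate, so that $\Psi^{-1}$ and the whole construction are well defined. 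The remaining Hamiltonianity requirements of~\eqref{eq:17}---symmetry of $g$, the Christoffel identities, the symmetry~\eqref{eq:187} and the commutation~\eqref{eq:19} of the $w_\alpha$---should then be checked to hold for the constructed data, confirming that $P$ is a genuine Ferapontov-type Hamiltonian operator.

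The main obstacle is the final step, verifying $[P,R^{(1)}]=0$. Because $P$ is nonlocal, this must be done with the weakly-nonlocal Schouten-bracket algorithm of~\cite{CLV19} implemented in~\cite{m.20:_weakl_poiss}, and here it is by far the most demanding computation in the paper: a direct run exhausts memory, because the bracket requires iterated total derivatives of large rational expressions whose denominators are powers $\sigma^n$ of the singular variety $\sigma$ of the Monge metric, so the intermediate expressions swell enormously before cancelling. The decisive trick is to introduce a fresh symbol for $\sigma^{-1}$, which turns every intermediate quantity into a polynomial; with this device the three components of $[P,R^{(1)}]$ can be reduced to canonical divergence-free form and shown to vanish identically. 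That identity, together with the uniqueness extracted from the linear systems above, completes the proof.
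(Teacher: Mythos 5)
Your proposal follows essentially the same route as the paper: the ansatz $g=\Psi^{-1}Q(\Psi^{-1})^\top$ from Conjecture~\ref{conj:comp-g}, the two linear commutativity conditions $g^{is}V^j_s=g^{js}V^i_s$ and $\Gamma^{sij}V^k_s=\Gamma^{skj}V^i_s$ determining $Q$ (hence $g$) uniquely up to scale, the curvature condition~\eqref{eq:20} fixing $(c^{\alpha\beta})$, and the final direct verification of $[P,R^{(1)}]=0$ via the weakly nonlocal Schouten-bracket algorithm of~\cite{CLV19} with the $\sigma^{-1}$ renaming trick to control expression swell. This is exactly the method the paper describes in Subsections~\ref{sec:comp-probl-their} and~\ref{sec:class-bi-hamilt} (including the computational work-around that made the $R^{(1)}$ case feasible), so there is nothing to correct.
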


\subsection{General projective classification}

In~\cite{FPV17:_system_cl}, it was proved that third-order homogeneous
Hamiltonian operators in the Doyle--Pot\"emin canonical form are invariant also
with respect to transformations that exchange $t$ and $x$. This, together with
projective reciprocal transformations that fix $t$, generate a larger group of
reciprocal transformations of the following type:
\begin{gather*}
  \mathrm d\tilde{x} = (A_i u^i + A_0)\mathrm dx + (A_iV^i + C_0)\mathrm dt,
  \\
  \mathrm d\tilde{t} = (B_i u^i + B_0)\mathrm dx + (B_iV^i + D_0)\mathrm dt,
\end{gather*}
coupled with affine transformations of the dependent variables
\cite{agafonov96:_system,agafonov01:_system_templ}. Such transformations act as
$\mathrm{SL}(n+2)$ transformations on the linear line congruence that
corresponds to the quasilinear first-order systems of PDEs determined by
third-order homogeneous Hamiltonian operators.

According to classical results by Castelnuovo, there are four distinct classes
of linear line congruences in $\mathbb{P}^4$ under the action of
$\mathrm{SL}(5)$: only two of them are endowed with third-order homogeneous
Hamiltonian operator, they correspond to $R^{(5)}$ and $R^{(6)}$,
see~\cite{FPV17:_system_cl}.

It turns out that the $6$ classes that we discussed in the previous section can
be transformed to $2$ classes, one of which contains a linear system, and the
other contains the simplest WDVV equation \eqref{eq:27}. It provides a proof
that all systems that we discussed in the previous section are bi-Hamiltonian
and integrable.

However, we stress that finding equivalence transformations between a given
system or a Hamiltonian operator and a representative of an corresponding
equivalence class can be extremely challenging, even with the most advanced
computer algebra systems.  So, having a direct proof and methods to efficiently
compute systems and Hamiltonian operators proves to be an invaluable set of
tools.

\section{Four-component systems}

There exists~\cite{FPV16} a projective classification of third-order
homogeneous Hamiltonian operators in $n=4$. The group acting on the operators
is that of $t$-fixing projective reciprocal transformations.

Unfortunately, unlike cases with $n\leq 3$, not all classes of
third-order homogeneous Hamiltonian operators and associated systems admit a
compatible first-order local or nonlocal Hamiltonian operator that
fulfill the Conjecture \ref{conj:comp-g}. As a distinguished example, let us
consider systems of the type
\begin{equation}
  \label{eq:42}
 u^1_t = u^2_x,\quad u^2_t=u^3_x,\quad u^3_t = u^4_x,\quad u^4_t=(f(u))_x.
\end{equation}
Linearly degenerate systems of the above type have been studied
in~\cite{agafonov98:_linear}.  In~\cite{FPV16} it is proved that the above
system is Hamiltonian with respect to a third-order homogeneous Hamiltonian
operator $R$ only for two values of $f$:
\begin{equation}
  \label{eq:43}
  f_1(u) = (u^2)^2 - u^1u^3,\qquad f_2(u) = (u^3)^2 - u^2u^4 + u^1.
\end{equation}

\begin{proposition}
  There does not exist a matrix $(g^{ij})$ fulfilling the
  Conjecture~\ref{conj:comp-g} for the systems~\eqref{eq:42}, \eqref{eq:43}.
\end{proposition}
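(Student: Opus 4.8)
The plan is to run the three-step procedure of Subsection~\ref{sec:class-bi-hamilt} with a fully general quadratic ansatz and to show that it cannot be closed with a non-degenerate metric. Concretely, I would argue by contradiction: suppose a Ferapontov operator $P$ compatible with the third-order operator~$R$ associated to \eqref{eq:42}, \eqref{eq:43} exists and obeys Conjecture~\ref{conj:comp-g}. Then its metric must be of the form $g=\Psi^{-1}Q(\Psi^{-1})^\top$, where $\Psi$ is the matrix of a Monge decomposition \eqref{eq:12} of the metric $f$ parameterising~$R$ (such a decomposition exists for both $f_1$ and $f_2$ since \cite{FPV16} guarantees that $R$ exists), and $Q=(Q^{\alpha\beta})$ is an unknown symmetric matrix each of whose entries is an \emph{a priori} arbitrary second-degree polynomial in $u^1,\dots,u^4$. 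This replaces the unknown functions $g^{ij}$ by a finite list of constant coefficients and turns every condition below into a linear system over $\mathbb{R}$. Note also that $\det g=\det Q/(\det\Psi)^2$, so $g$ is non-degenerate if and only if $\det Q\not\equiv 0$.

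First I would exploit the companion form of the velocity matrix. For the chain system \eqref{eq:42} one has $V^i_j=\delta^{i+1}_j$ for $i\le 3$ and $V^4_j=\partial f/\partial u^j$, so $V$ is the companion matrix of $\lambda^4-f_{,4}\lambda^3-f_{,3}\lambda^2-f_{,2}\lambda-f_{,1}$. Imposing the commutativity condition $g^{is}V^j_s=g^{js}V^i_s$ of \eqref{eq:185} (with $w_1=V$) then forces $g$ into a Hankel-type family: a short computation shows $g^{13}=g^{22}$, $g^{14}=g^{23}$, $g^{33}=g^{24}$, while $g^{24},g^{34},g^{44}$ are produced from $g^{11},g^{12},g^{22},g^{23}$ by the companion recursion with coefficients $f_{,k}$, leaving exactly four free seeds. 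I would substitute the two specific data sets $(f_{,1},f_{,2},f_{,3},f_{,4})=(-u^3,2u^2,-u^1,0)$ for $f_1$ and $(1,-u^4,2u^3,-u^2)$ for $f_2$, and record the resulting linear relations among the coefficients of $Q$.

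Next I would impose the second commutativity condition $\Gamma^{sij}V^k_s=\Gamma^{skj}V^i_s$, which is \eqref{eq:187} written for the conservative system, with $\Gamma^{lij}=\frac12(g^{is}g^{jl}_{,s}+g^{ls}g^{ji}_{,s}-g^{js}g^{il}_{,s})$ computed from the ansatz. Since $g$ is rational in $u$ with denominator $(\det\Psi)^2=(\det\Phi)^{-1}\det f$, clearing denominators yields polynomial identities in $u$ whose coefficients are linear in the unknown constants of $Q$; collecting these produces a large but finite linear system. The remaining requirements in \eqref{eq:17} are not obstructions at this stage: symmetry and $[w_1,w_2]=0$ hold automatically for the translation fields, while \eqref{eq:20} only serves to solve for $c^{\alpha\beta}$, so I would postpone it to the very end.

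The conclusion I expect is that this combined system, once reduced (by \texttt{rifsimp} or Gr\"obner elimination), admits no solution with $\det Q\not\equiv 0$: every branch surviving the polynomiality of $Q$, the companion constraint, and the differential condition \eqref{eq:187} lies inside the degeneracy locus $\{\det Q\equiv 0\}$, whence $\det g\equiv 0$, contradicting $\det(g^{ij})\neq 0$. The main obstacle is precisely this elimination: the over-determination comes from the interaction of three heterogeneous constraints, and certifying that the entire surviving solution set falls into $\{\det Q=0\}$ is a genuine symbolic computation rather than a short identity. I would carry out the two cases $f_1$ and $f_2$ separately, since their companion data and hence their linear systems differ, and in each case exhibit the degeneracy explicitly to exclude any non-degenerate branch.
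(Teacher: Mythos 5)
Your proposal reconstructs essentially the paper's own route: the proposition is established by running the three-step procedure of Subsection~\ref{sec:class-bi-hamilt} with the ansatz $g=\Psi^{-1}Q(\Psi^{-1})^\top$, where $Q$ is symmetric with second-degree polynomial entries, and verifying by computer algebra that the resulting algebraic system for the constant coefficients of $Q$ admits no non-degenerate solution for either $f_1$ or $f_2$. Your reduction of non-degeneracy to $\det Q\not\equiv 0$ via $\det g=\det Q/(\det\Psi)^2$ is correct, your companion data for the two flux functions ($(f_{,1},f_{,2},f_{,3},f_{,4})=(-u^3,2u^2,-u^1,0)$ and $(1,-u^4,2u^3,-u^2)$) are right, and the Hankel-type consequences of $g^{is}V^j_s=g^{js}V^i_s$ (the relations $g^{13}=g^{22}$, $g^{14}=g^{23}$, $g^{24}=g^{33}$ together with the recursion producing $g^{24},g^{34},g^{44}$, leaving four seeds) check out; this explicit companion analysis is a detail the paper does not spell out, and it is a genuine refinement of step one.

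Two points need correction or caution. First, it is not true that ``every condition below'' becomes a linear system over $\mathbb{R}$: step one is linear in the coefficients of $Q$, but $\Gamma^{lij}=\frac12(g^{is}g^{jl}_{,s}+g^{ls}g^{ji}_{,s}-g^{js}g^{il}_{,s})$ is bilinear in $g$ and its first derivatives, so the condition $\Gamma^{sij}V^k_s=\Gamma^{skj}V^i_s$ yields equations \emph{quadratic} in the unknown constants. This matches the paper's phrasing that the ansatz reduces the search for $P$ to ``a system of algebraic equations'' rather than a linear one; the slip is harmless for your proposed \texttt{rifsimp}/Gr\"obner workflow, but a literal linear solve on step two would be set up wrongly. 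Second, your decisive claim --- that every branch surviving steps one and two lies in $\{\det Q\equiv 0\}$ --- is announced as an expected outcome, not derived; as written, your argument is a correct setup plus an unexecuted elimination. This is unavoidable for a blind attempt and mirrors the paper, whose proof of this proposition is precisely such a direct computer-assisted computation, but the proof is complete only once that elimination is actually carried out (and, should a non-degenerate branch survive steps one and two, the postponed condition \eqref{eq:20} would have to be invoked as a further obstruction, since it constrains the curvature of $g$ and does not merely define $c^{\alpha\beta}$).
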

The above proposition does not exclude the possibility that an operator~$P$
that does not fulfill the Conjecture and is still compatible with the
operator~$R$ exists for the above system; however, given the fact that the
Conjecture has been verified in a substantial number of cases (and never be
disproven at the moment) we think that such a possibility has a little chance.

On the other hand, it is believed that there is a unique integrable case within
the class of systems of conservation laws that admits a Hamiltonian formulation
through a third-order homogeneous Hamiltonian operator (see the discussion at
the end of \cite{ferapontov18:_system_hamil}). This is represented by the
system
\begin{equation}\label{eq:44}
  \begin{split}
    &u^1_t=u^3_x,\\
&u^2_t=u^4_x,\\
&u^3_t=\left(\frac{u^1u^2u^4+u^3((u^3)^2+(u^4)^2-(u^2)^2-1)}{u^1u^3+u^2u^4}
\right)_x,\\
&u^4_t=\left(\frac{u^1u^2u^3+u^4((u^3)^2+(u^4)^2-(u^1)^2-1)}{u^1u^3+u^2u^4}
\right)_x,
\end{split}
\end{equation}
which is known to possess a Lax pair and a Doyle--Pot\"emin Hamiltonian
operator~$R$ parameterised by a Monge metric
$f=(f_{ij})$~\cite{FPV17:_system_cl},
\begin{gather*}
(f_{ij})=\begin{pmatrix}
(u^2)^2{+}(u^3)^2+1 & -u^1u^2+u^3u^4 & -u^1u^3+u^2u^4 & - 2u^2u^3 \\
-u^1u^2+u^3u^4 & (u^1)^2{+}(u^4)^2{+}1 & -2u^1u^4 & u^1u^3-u^2u^4 \\
-u^1u^3+u^2u^4 & -2u^1u^4 & (u^1)^2{+}(u^4)^2 & u^1u^2-u^3u^4 \\
- 2u^2u^3 & u^1u^3-u^2u^4 & u^1u^2-u^3u^4 & (u^2)^2{+}(u^3)^2
\end{pmatrix}
\end{gather*}
The above Monge metric can be factorised as
$f=\Psi\Phi\Psi^{\top}=\psi^\alpha_i\phi_{\alpha\beta}\psi^\beta_j$,
\begin{equation*}
\Psi=\begin{pmatrix}
-u^2 & -u^3& 1 & 0 \\
u^1 & -u^4 & 0 & 1 \\
-u^4 & u^1 & 0 & 0 \\
u^3 & u^2 & 0 & 0
\end{pmatrix},\quad
\Phi=\begin{pmatrix}
1 & 0 & 0 & 0 \\
0 & 1 & 0 & 0 \\
0 & 0 & 1 & 0 \\
0 & 0 & 0 & 1
\end{pmatrix}
\end{equation*}

It is an original result the fact that the above system is bi-Hamiltonian of
the type under consideration.
\begin{theorem}
  The system~\eqref{eq:44} is Hamiltonian with respect to a first-order
  nonlocal Hamiltonian operator~$P$ that is compatible with~$R$ and is defined
  by the metric $g=(g^{ij})$ splitting as in Conjecture~\ref{conj:comp-g},
  $c^{11}=c^{22}=1$, $c^{12}=c^{21}=0$.  Here,
\begin{gather*}
Q^{11}=(u^1)^2+(u^2)^2+(u^3)^2+(u^4)^2,\quad
Q^{12}= -2u^1u^4+2u^2u^3,\quad
Q^{13}=-u^2,\\
Q^{14}= u^1,\quad
Q^{22}= (u^1)^2+(u^2)^2+(u^3)^2+(u^4)^2+1,\quad
Q^{23}= -2u^3,\\
Q^{24}= -2u^4,\quad
Q^{33}=(u^1)^2+(u^3)^2+1,\quad
Q^{34}= u^1u^2+u^3u^4,\\
Q^{44}= (u^2)^2+(u^4)^2+1.
\end{gather*}
\end{theorem}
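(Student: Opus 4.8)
The plan is to follow the constructive procedure described in Subsection~\ref{sec:class-bi-hamilt}, taking as given data the third-order operator $R$ (equivalently its Monge metric $f=\Psi\Phi\Psi^\top$ with $\Phi=\mathrm{Id}$) together with the velocity matrix $(V^i_j)$ of the system~\eqref{eq:44}, where $V^1=u^3$, $V^2=u^4$ and $V^3$, $V^4$ are the two rational right-hand sides. Since solving for a generic Ferapontov operator~$P$ in four components is out of reach, I would invoke Conjecture~\ref{conj:comp-g} and posit the ansatz $g=\Psi^{-1}Q(\Psi^{-1})^\top$ with $Q=(Q^{\alpha\beta})$ a symmetric matrix whose entries are second-degree polynomials in~$u$ with undetermined constant coefficients. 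The goal is then to pin down these coefficients, extract the nonlocal constants~$c^{\alpha\beta}$, and finally certify compatibility $[P,R]=0$.

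First I would impose the two commutativity conditions that localize~$Q$, working with the two commuting Weingarten fields $w^i_{1j}=V^i_j$ and $w^i_{2j}=\delta^i_j$. The symmetry condition $g^{is}V^j_s=g^{js}V^i_s$ coming from~\eqref{eq:185} is, after clearing the common denominator $u^1u^3+u^2u^4$, a system of polynomial identities linear in the unknown coefficients of~$Q$; likewise the connection condition $\Gamma^{sij}V^k_s=\Gamma^{skj}V^i_s$, with $\Gamma^{lij}=\tfrac12(g^{is}g^{jl}_{,s}+g^{ls}g^{ji}_{,s}-g^{js}g^{il}_{,s})$, encodes that the system is in conservative form. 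I expect these two linear systems together to determine~$Q$ up to an overall scale, and I would check that the unique nonzero solution, suitably normalized, is exactly the~$Q$ displayed in the statement.

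Next I would recover the nonlocal tail. With~$g$ (hence the Levi-Civita connection $\Gamma^{ij}_k=-g_{jh}\Gamma^{hi}_k$ and its curvature $R^{ij}_{kl}$) fully determined, I would solve the defining relation~\eqref{eq:20}, $R^{ij}_{kl}=c^{\alpha\beta}(w^i_{\alpha l}w^j_{\beta k}-w^i_{\alpha k}w^j_{\beta l})$, with $w_1,w_2$ as above; reading off the constant coefficients should yield $c^{11}=c^{22}=1$ and $c^{12}=c^{21}=0$. At the same time I would verify the remaining Hamiltonian conditions~\eqref{eq:17}, namely $g^{ij}=g^{ji}$, $g^{ij}_{,k}=\Gamma^{ij}_k+\Gamma^{ji}_k$, $g^{is}\Gamma^{jk}_s=g^{js}\Gamma^{ik}_s$, together with~\eqref{eq:187} and~\eqref{eq:19} for the pair $w_1,w_2$, so that~$P$ is genuinely a Ferapontov operator of the form~\eqref{eq:16}.

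Finally, compatibility $[P,R]=0$ must be established by direct computation, using the divergence-free form of the Schouten bracket for weakly nonlocal operators from~\cite{CLV19} as implemented in~\cite{m.20:_weakl_poiss}: collect the coefficients of $[P,R](\psi^1,\psi^2,\psi^3)$ as a three-covector and check that each entry vanishes modulo total divergencies. This last step is the main obstacle. It is not conceptually difficult, but it is a heavy symbolic computation---the nonlocal terms force explicit manipulation of $\mathrm D_x^{-1}$, and the iterated derivatives of the rational coefficients (all with denominators powers of the singular variety $u^1u^3+u^2u^4$) produce severe expression swell that collapses to zero only at the very end. I would mitigate this exactly as reported for the nonlocal three-component cases, abbreviating the inverse of the singular variety so that all intermediate expressions remain polynomial, thereby keeping the memory footprint within reach.
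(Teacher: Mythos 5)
Your proposal follows essentially the same route as the paper: posit the ansatz $g=\Psi^{-1}Q(\Psi^{-1})^\top$ of Conjecture~\ref{conj:comp-g}, determine $Q$ from the two commutativity conditions $g^{is}V^j_s=g^{js}V^i_s$ and $\Gamma^{sij}V^k_s=\Gamma^{skj}V^i_s$, obtain $c^{\alpha\beta}$ from the curvature relation~\eqref{eq:20} with $w_1=V^i_ju^j_x\,\partial_{u^i}$, $w_2=u^i_x\,\partial_{u^i}$, and then certify $[P,R]=0$ by a direct computer-algebra computation of the weakly nonlocal Schouten bracket, including the same polynomialization trick for the inverse of the singular variety that the authors report using for the heavy $n=3$ and $n=4$ cases. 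This matches the procedure laid out in Subsection~\ref{sec:class-bi-hamilt} and the computational remarks in Subsection~\ref{sec:comp-probl-their}, so no further comparison is needed.
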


\section{And beyond}\label{sec:AndBeyound}

To the best of our knowledge, there is only one known example of a system with
a WDVV-type bi-Hamiltonian structure for $n>4$.  It comes from the WDVV
equations in $N=4$ dimensions~\cite{PV15} (here, $n=6$).  A new example not
related with WDVV equations is given below.

In~\cite[Eq.~(11)]{FPX2021} the problem of finding integrable Lagrangians
within a certain class is reformulated as the problem of finding solutions to
two \emph{commuting} quasilinear systems of first-order PDEs of the type
$u^i_t=V^i_j(u^k)u^j_x$.

Here, we rewrite the above two systems in conservative form; one of
them is:
\begin{equation}\label{eq:451}
  \begin{split}
    &u^1_t=\left(\frac{4 (u^1)^3 u^3-4 (u^1)^2 u^6+2 u^1 u^2 u^4-(u^4)^2}{2(2 u^1 u^2 u^3-u^1 u^5-u^3 u^4)}\right)_x,\\
    &u^2_t=u^1_x,\\
    &u^3_t=\left(\frac{4 (u^1)^2 (u^3)^2-4 u^1 u^3 u^6+u^4 u^5}{2(2 u^1 u^2 u^3-u^1 u^5-u^3 u^4)}\right)_x,\\
    &u^4_t=\left(\frac{2 (u^1)^3 u^5-4 (u^1)^2 u^2 u^6+2 (u^1)^2 u^3 u^4+2 u^1 (u^2)^2 u^4-u^2 (u^4)^2}{2(2 u^1 u^2 u^3-u^1 u^5-u^3 u^4)}\right)_x,\\
    &u^5_t=\left(\frac{2 (u^1)^2 u^3 u^5+2 u^1 (u^3)^2 u^4-2 u^1 u^5 u^6+u^2 u^4 u^5-2 u^3 u^4 u^6}{2(2 u^1 u^2 u^3-u^1 u^5-u^3 u^4)}\right)_x,\\
    &u^6_t=\frac12 u^4_x.
\end{split}
\end{equation}

\begin{proposition}
  System~\eqref{eq:451} possesses a third-order Hamiltonian operator~$R$
  parameterised by a Monge metric $f=(f_{ij})$, where
  \begin{gather*}
(f_{ij})=\begin{pmatrix}
(u^3)^2                       & -\frac{u^5+u^2u^3}2      & -u^1u^3+\frac{(u^2)^2}2    & 0           & -\frac{u^2}2 & 0 \\
-\frac{u^5+u^2u^3}2          & u^1u^3+u^6                & -\frac{u^4+u^1u^2}2      & u^3         & u^1         & -\frac{u^2}2 \\
-u^1u^3+\frac{(u^2)^2}2        & -\frac{u^4+u^1u^2}2      & (u^1)^2                   & -\frac{u^2}2 & 0           & 0 \\
0                             & u^3                       & -\frac{u^2}2               & 0           & \frac12     & 0 \\
-\frac{u^2}2                   & u^1                       & 0                         & \frac12     & 0           & 0 \\
0                             & -\frac{u^2}2               & 0                         & 0           & 0           & 1
\end{pmatrix}
\end{gather*}
\end{proposition}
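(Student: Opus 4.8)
The plan is to reduce the statement to the algebraic and differential criteria recalled above. Two things must be established: that the displayed $f$ parameterises a genuine third-order Hamiltonian operator $R$ in Doyle--Pot\"emin form, and that the conservative system \eqref{eq:45} is Hamiltonian with respect to it. The first is governed by the conditions \eqref{eq:505}, the second by the system \eqref{eq:1} of \cite{FPV17:_system_cl}.

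First I would verify that $f$ is a Monge metric. Every entry of $(f_{ij})$ is a polynomial of degree at most two in the $u^k$, so the cyclic condition \eqref{eq:507}, $f_{ij,k}+f_{jk,i}+f_{ki,j}=0$, collapses to a finite list of linear identities among the coefficients of those quadratics and is checked instantly. With $f$ Monge I would set $c_{ijk}=\tfrac13(f_{ik,j}-f_{ij,k})$ as in \eqref{eq:506}. To certify that $R$ is Hamiltonian it then remains to confirm the last relation \eqref{eq:733}, $c_{ijk,l}=-f^{pq}c_{pil}c_{qjk}$. A complementary and more transparent route, which I would pursue in parallel, is to exhibit an explicit Monge decomposition $f=\Psi\Phi\Psi^\top$ with constant symmetric $\Phi$ and affine $\Psi=(\psi^\gamma_{ks}u^s+\omega^\gamma_k)$ as in \eqref{eq:12}; since the existence of such a factorisation, with the stated constraints on $\psi^\gamma_{ij}$ and $\omega^\gamma_k$, is equivalent to the full set \eqref{eq:505}, producing it settles the Hamiltonian property without inverting $f$.

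It then remains to link the system to $R$ through \eqref{eq:1}. I would compute the velocity matrix $(V^i_j)$ by differentiating the six rational fluxes of \eqref{eq:45}, all of which share the single denominator $S:=2u^1u^2u^3-u^1u^5-u^3u^4$ (up to the factor~$2$), the singular variety of $f$, which plays here the role that $S=u^1u^2-u^3$ did in the three-component cases. I would then check the three relations \eqref{eq:1} in order: the symmetry $f_{is}V^s_j=f_{js}V^s_i$, that is, $fV$ symmetric; the cyclic identity $c_{skj}V^s_i+c_{sik}V^s_j+c_{sji}V^s_k=0$; and $f_{ks}V^s_{ij}=c_{ksj}V^s_i+c_{ksi}V^s_j$. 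Equivalently, once $f=\Psi\Phi\Psi^\top$ is known one may instead verify that the fluxes take the form $V^i=\psi^i_\gamma W^\gamma$ with $W^\gamma=\eta^\gamma_su^s+\xi^\gamma$ and constants $\eta^\gamma_s,\xi^\gamma$ solving the attendant linear system, as guaranteed in \cite{FPV17:_system_cl}. Either way, the validity of \eqref{eq:1} is equivalent to the system being Hamiltonian with respect to $R$, which together with \eqref{eq:505} proves the proposition.

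The genuine difficulty is computational rather than conceptual. The relation \eqref{eq:733} and the last equation of \eqref{eq:1} both involve the inverse $f^{pq}=(f_{ij})^{-1}$ of a $6\times6$ matrix with quadratic entries, hence they generate sizeable rational expressions carrying the high-degree denominator $\det f$, and the whole point is that these must cancel to zero. I would tame the expression swell by the device used in Subsection~\ref{sec:comp-probl-their}: since every denominator is a power of the singular variety $S$, assigning a symbolic name to $S^{-1}$ turns all intermediate quantities into polynomials and keeps the memory cost under control. As this proposition concerns only the third-order operator $R$, no nonlocal tail and no Schouten bracket $[P,R]$ enter at this stage, which is precisely why the $n=6$ verification remains feasible on modest hardware; the only real labour is arranging the cancellations so that the residuals vanish after clearing the common factor $S$.
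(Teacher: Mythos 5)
Your proposal is correct and rests on essentially the same mechanism as the paper's proof: both reduce the proposition to the criteria of \cite{FPV17:_system_cl}, namely the Hamiltonian conditions~\eqref{eq:505} (equivalently the Monge decomposition~\eqref{eq:12}) together with the compatibility system~\eqref{eq:1} linking $f$ to the fluxes of~\eqref{eq:45}. The only difference is one of direction: the paper treats $f$ as an \emph{unknown} Monge metric and solves the algebraic system~\eqref{eq:1}, obtaining the displayed $f$ as the unique solution up to a constant multiple, whereas you verify the displayed $f$ directly --- which fully suffices for the existence claim actually stated.
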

\begin{proof}
  We make use of equations~\eqref{eq:1} for the above system, with an unknown Monge metric.
  We find the above Monge metric as the unique solution (up to constant multiples).
\end{proof}

The above Monge metric can be factorised as
$f=\Psi\Phi\Psi^{\top}=\psi^\alpha_i\phi_{\alpha\beta}\psi^\beta_j$ where
\begin{equation*}
\Psi=\begin{pmatrix}
-u^3 & -u^2 & u^5 & 0 & 0 &0\\
0 & u^1 & -u^6 & u^3 & 1 &0\\
u^1 & 0 & u^4 & -u^2 & 0 &0\\
0 & 1 & -u^3 & 0 & 0 &0\\
0 & 0 & -u^1 & 1 & 0 &0\\
0 & 0 & u^2 & 0 & 0 &1
\end{pmatrix},\quad
\Phi=\begin{pmatrix}
1 & 0 & 0 & 0 & 0 &0 \\
0 & 0 & 0 & \frac12 & 0 &0 \\
0 & 0 & 0 & 0 & -\frac12 &0 \\
0 & \frac12 & 0 & 0 & 0 &0 \\
0 & 0 & -\frac12 & 0 & 0 &0 \\
0 & 0 & 0 & 0 & 0 &1
\end{pmatrix}
\end{equation*}

\begin{theorem}
  The system~\eqref{eq:451} is Hamiltonian with respect to a first-order local
  Hamiltonian operator $P$ that is defined by the metric $g=(g^{ij})$ splitting
  as in Conjecture~\ref{conj:comp-g} and compatible with~$R$.  Here,
\begin{gather*}
Q^{11}=2u^1u^3,\quad
Q^{12}= u^1u^2-u^4,\quad
Q^{13}=-2u^1u^5+2u^3u^4,\\
Q^{14}= -u^2u^3+u^5,\quad
Q^{15}= 0,\quad
Q^{16}= 0,\quad
Q^{22}= 2(u^1)^2,\\
Q^{23}= -4u^1u^6+2u^2u^4,\quad
Q^{24}= 2u^1u^3-(u^2)^2+4u^6,\quad
Q^{25}= 4u^1,\\
Q^{26}= 2u^4,\quad
Q^{33}=-2u^4u^5+2(u^6)^2,\quad
Q^{34}= 2u^2u^5-4u^3u^6,\\
Q^{35}= -4u^1u^3+(u^2)^2-2u^6,\quad
Q^{36}= u^2u^6-u^1u^5-u^3u^4,\quad
Q^{44}= 2(u^3)^2,\\
Q^{45}= 4u^3,\quad
Q^{46}= 2u^5,\quad
Q^{55}= 2,\quad
Q^{56}= u^2,\quad
Q^{24}= 2u^6.
\end{gather*}
\end{theorem}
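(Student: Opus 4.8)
\emph{Proof sketch (proposal).} The plan is to construct $P$ through the ansatz of Conjecture~\ref{conj:comp-g} and then to verify the two assertions of the statement separately: that the local operator $P^{ij}=g^{ij}\mathrm D_x+\Gamma^{ij}_su^s_x$ built from $g=\Psi^{-1}Q(\Psi^{-1})^\top$ is Hamiltonian, and that it is compatible with the third-order operator $R$ of the preceding proposition. Because the claimed $P$ is \emph{local}, we must have $c^{\alpha\beta}=0$, so the nonlocal tail of~\eqref{eq:16} is absent and the Hamiltonianity conditions~\eqref{eq:17} reduce to the classical Dubrovin--Novikov requirements: $g^{ij}$ is symmetric and non-degenerate, the $\Gamma^{ij}_s$ are the contravariant Christoffel symbols of the Levi-Civita connection of $g_{ij}=(g^{ij})^{-1}$, and --- since~\eqref{eq:20} now reads $R^{ij}_{kl}=0$ --- the metric $g$ is flat.

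To pin down $Q$ I would follow the three-step procedure of Subsection~\ref{sec:class-bi-hamilt}, using the two commuting vector fields $w_1=V^i_su^s_x\,\partial/\partial u^i$ and $w_2=u^i_x\,\partial/\partial u^i$. Writing $g=\Psi^{-1}Q(\Psi^{-1})^\top$ with $Q=(Q^{\alpha\beta})$ a symmetric matrix of undetermined quadratic polynomials in $u$, as dictated by the Conjecture, I first impose the commutativity condition $g^{is}V^j_s=g^{js}V^i_s$ of~\eqref{eq:185}; since $\Psi$ and the velocity matrix $(V^i_j)$ are explicit for~\eqref{eq:45}, this is a \emph{linear} system on the coefficients of the $Q^{\alpha\beta}$. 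I then impose the second condition $\Gamma^{sij}V^k_s=\Gamma^{skj}V^i_s$, with $\Gamma^{lij}=\tfrac12(g^{is}g^{jl}_{,s}+g^{ls}g^{ji}_{,s}-g^{js}g^{il}_{,s})$, which fixes $Q$ to the matrix displayed in the statement up to an overall scale. It is precisely the quadratic-polynomial ansatz that converts an otherwise intractable system of nonlinear PDEs for $g$ into this finite linear-algebra problem.

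With $Q$, and hence $g$, determined, I would close the argument by direct verification. Hamiltonianity of $P$ reduces to checking that the Riemann curvature of $g$ vanishes identically; this is a finite computation whose only nuisance is that $g=\Psi^{-1}Q(\Psi^{-1})^\top$ is rational, its denominators being powers of $\det\Psi$ (proportional to $\sqrt{\det f}$), so that the Christoffel symbols and curvature inherit this singular variety. The two commutativity conditions already imposed are exactly the integrability conditions guaranteeing a density $h$ with $V^i_j=\nabla^i\nabla_jh$; together with flatness this shows that~\eqref{eq:45} is indeed Hamiltonian with respect to $P$. Finally, compatibility $[P,R]=0$ is checked by computing the divergence-free form of the Schouten bracket with the algorithm of~\cite{CLV19} as implemented in~\cite{m.20:_weakl_poiss}, collecting the coefficients of the resulting three-covector and verifying that they all vanish.

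I expect the compatibility step $[P,R]=0$ to be the principal hurdle, as it is for every WDVV-type pair: bracketing a first-order against a third-order operator produces large intermediate expressions that collapse to zero only at the very end. Here, though, the difficulty is greatly alleviated by the locality of $P$, which removes the $\mathrm D_x^{-1}$ tails and the associated expression swell; this is exactly why, as anticipated in the Introduction, the $n=6$ case is computationally manageable. A more conceptual caveat is that we have no a priori guarantee that the Conjecture-based ansatz must deliver a compatible $P$ --- it merely reduces the search to the algebraic problem above, compatibility then being confirmed by explicit computation.
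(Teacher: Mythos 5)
Your proposal is correct and follows essentially the same route as the paper: the authors determine $Q$ via the ansatz of Conjecture~\ref{conj:comp-g} by solving the two commutativity conditions $g^{is}V^j_s=g^{js}V^i_s$ and $\Gamma^{sij}V^k_s=\Gamma^{skj}V^i_s$ (Subsection~\ref{sec:class-bi-hamilt}), and then verify Hamiltonianity and $[P,R]=0$ by direct computer-algebra computation with the packages of~\cite{m.20:_weakl_poiss}, the $n=6$ case being tractable precisely because $P$ is local, as noted in Subsection~\ref{sec:comp-probl-their}. The only cosmetic difference is that you impose flatness of $g$ from the outset, whereas the paper's general scheme solves~\eqref{eq:20} for $(c^{\alpha\beta})$ and finds it vanishes --- the two coincide in this local case.
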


\begin{remark}
  It is interesting to observe that there is another quasilinear first-order
  system in a conservative form that can be deduced from the two quasilinear
  systems in~\cite[Eq. (11)]{FPX2021}. The above bi-Hamiltonian structure of
  WDVV-type holds also for this system, a phenomenon that has already been
  observed for the four-component WDVV systems in~\cite{ferapontov96:_hamil}.
\end{remark}

\begin{remark}
  A further example of bi-Hamiltonian structure of WDVV-type is incomplete, but
  interesting. Indeed, the simplest case of Oriented Associativity equation has
  a first-order local operator~$P$ and a third-order \emph{nonlocal}
  operator~$R$, which suggests the possibility that the class of bi-Hamiltonian
  structures of WDVV type can be further enlarged.

  It must be stressed that in the above case compatibility of the two operators
  $P$, $R$ has never been proved, also in view of the computational complexity
  of the problem.
\end{remark}

\section*{Acknowledgements}

We wish to express our thanks to A. Caruso, E.V. Ferapontov, P. Lorenzoni and
A.C. Norman for stimulating discussions.

SO and RV acknowledge financial support from Istituto Nazionale di Fisica
Nucleare (INFN) by IS-CSN4 \emph{Mathematical Methods of Nonlinear Physics}
(MMNLP), from the GNFM of Istituto Nazionale di Alta Matematica ``F. Severi''
(INdAM) and from Dipartimento di Matematica e Fisica ``E. De Giorgi'' of the
Universit\`a del Salento. RV also acknowledges financial support from PRIN 2022
(2022TEB52W) - PE1 - \emph{The charm of integrability: from nonlinear waves to
  random matrices}.

This work is (partially) supported by ICSC -- Centro
Nazionale di Ricerca in High Performance Computing, Big Data and Quantum
Computing, funded by European Union -- NextGenerationEU.

\bibliographystyle{hplain}
%\bibliography{mrabbrev,gdeq}

\providecommand{\cprime}{\/{\mathsurround=0pt$'$}}
  \providecommand*{\SortNoop}[1]{}

\end{document}